\newcommand{\lyxaddress}[1]{
\par {\raggedright #1
\vspace{1.4em}
\noindent\par}
}
\theoremstyle{plain}
\newtheorem{thm}{\protect\theoremname}
\providecommand{\theoremname}{Theorem}
\begin{document}

\title{Bayesian Parameter Estimation for Latent Markov Random Fields and
Social Networks}

\author{Richard G. Everitt\footnote{Richard G. Everitt is a Postdoctoral Scientist at the University of Oxford, 1 South Parks Road, Oxford OX1 3TG, UK (e-mail: richard.g.everitt@gmail.com).}}

\maketitle

\lyxaddress{Nuffield Department of Clinical Medicine,\\
University of Oxford,\\
John Radcliffe Hospital,\\
Oxford OX3 9DU, UK.\\
Tel.: +447815763118\\
\texttt{richard.g.everitt@gmail.com}
}

\pagebreak

\begin{abstract}
Undirected graphical models are widely used in statistics, physics
and machine vision. However Bayesian parameter estimation for undirected models
is extremely challenging, since evaluation of the posterior typically
involves the calculation of an intractable normalising constant. This
problem has received much attention, but very little
of this has focussed on the important practical case where the data
consists of noisy or incomplete observations of the underlying hidden
structure. This paper specifically addresses this problem, comparing
two alternative methodologies. In the first of these approaches particle
Markov chain Monte Carlo \citep{Andrieu2010} is used to efficiently
explore the parameter space, combined with the exchange
algorithm \citep{Murray2006} for avoiding the calculation of the
intractable normalising constant (a proof showing that this combination targets the correct distribution in found in a supplementary appendix online). This approach is compared with approximate
Bayesian computation \citep{Pritchard1999}. Applications to estimating
the parameters of Ising models and exponential random graphs from noisy data are presented.  Each algorithm used in the paper targets an approximation to the true posterior due to the use of MCMC to simulate from the latent graphical model, in lieu of being able to do this exactly in general. The supplementary appendix also describes the nature of the resulting approximation.
\end{abstract}

{\bf Keywords:} approximate Bayesian computation, particle Markov chain Monte Carlo, intractable normalising constants, exponential random graphs, graphical models.

\section{INTRODUCTION}

\subsection{Motivation}

The subject of this paper is Bayesian inference in models for large
numbers of dependant objects, e.g. pixels in an image, people in a
social network, or pages on the world wide web. We focus on \emph{Markov
random fields} (MRFs), also known as \emph{undirected graphical models}
(UGMs), which are the models most commonly used for this type of data.
Much of the literature (e.g. \citet{Murray2006,Moller2006}) is devoted
to estimating the parameters of these models in the case where the
MRF is completely observed, but for real data this is often not the
case: in practice observations of the MRF can be noisy or incomplete.
As a result, estimation of the parameters of the model presents a
significant computational challenge, particularly when using Bayesian
estimation in order to account for the missing data in a principled
manner.

Let $y\in\mathcal{Y}$ represent noisy or incomplete observations
of the hidden variables $x\in\mathcal{X}$. Our aim is to estimate
the parameters $\theta\in\Theta$ of a model for this data. Using
the Bayesian approach, we describe a joint distribution $p(\theta,x,y)=p(\theta)f(x,y|\theta)$
over $\theta$, $x$ and $y$ and wish to estimate $\theta$ through
the posterior distribution:
\begin{equation}
p(\theta|y)\propto\int_{x}p(\theta)f(x,y|\theta)dx.\label{eq:posterior}
\end{equation}

Our focus is on Monte Carlo methods for simulating from $p(\theta|y)$,
and we present two quite general methodologies for achieving this.
We apply these methods to two well known types of data: a noisy image
(similar to \citet{JulianBesag1986}); and an indirectly observed
social network (similar to \citet{Caimo2011}).

Parameter estimation in problems such as these is often addressed
using maximum likelihood (as in \citet{Heckerman1996}), rather than
a fully Bayesian approach (aside from the recent work in \foreignlanguage{english}{\citet{Moller2004,Murray2006,Friel2009,Koskinen2010}}).
In this paper we compare the computational approach taken in these
recent papers, which we show is not always appropriate, to alternative
computational methods.

In some cases we are also interested in
the posterior distribution over the hidden variables $x$ (in the
applications described above, this is the case if we wish to infer
a denoised Ising model or social network) but this is not our primary
focus. A part of the methodology we describe is the use of sequential Monte Carlo (SMC) samplers
for simulating from the posterior of $x$, which can be significantly
more efficient than using MCMC on these models.

In the remainder of this section we describe MRFs in more detail in
section \ref{sub:Models}, and basic methods of inference in these
model in section \ref{sub:Inference}. Section \ref{sub:Inference}
also describes the challenges faced in using these basic inference
methods, and this motivates the rest of the paper.

\selectlanguage{english}%

\subsection{Models\label{sub:Models}}

\selectlanguage{british}%

\subsubsection{Markov Random Fields}

\selectlanguage{english}%
Throughout the paper, we take $p(\theta)$ to have some simple form
that can be both evaluated pointwise and simulated from using standard
techniques. The primary purpose of this paper is to discuss computational
methods for inference rather than choosing the most appropriate models,
therefore we choose a proper prior simply to ensure that the posterior
always exists.

We assume that the likelihood $f(x,y|\theta)$ \foreignlanguage{british}{factorises}
as an M\foreignlanguage{british}{RF}, i.e. the joint distribution
is completely defined by \emph{potential} functions over groups of
variables known as \emph{cliques}\foreignlanguage{british}{ \citep{Besag1974}:}
\begin{equation}
f(x,y|\theta)=\frac{1}{Z(\theta)}\prod_{j=1}^{M}\Phi_{j}(x,y\in C_{j}|\theta)
\end{equation}
where $\Phi_{1:M}$ are \emph{potentials}
on cliques $C_{1:M}$, and the normalising constant
\begin{equation}
Z(\theta)=\int_{x,y}\prod_{j=1}^{M}\Phi_{j}(x,y\in C_{j}|\theta)dxdy\label{eq:Z}
\end{equation}
is usually an intractable integral.
The methods described in this paper also apply to directed
graphical models, which is easier to deal with since the factors of the joint distribution in a directed model are simply conditional distributions that can be normalised independently \citep{Heckerman1996}, and hence 
the intractable normalising constant is not present.

\subsubsection{Ising Models\label{sub:Ising-models}}

The Ising model is a particular MRF that defines a distribution over
a random vector $x$ of binary variables each of which can take the
value $-1$ or $1$. The distribution is defined by the pairwise interactions
between variables as $f(x|\theta_{x})=\exp\left[\theta_{x}\sum_{(i,j)\in\mathbf{N}}x_{i}x_{j}\right]/Z(\theta_{x})$, where $\mathbf{N}$ is a set that defines pairs of nodes that are
{}``neighbours'' and, for our purposes, $\theta_{x}>0$. This distribution
follows through choosing cliques containing only two nodes, with the
potential function \foreignlanguage{english}{$\Phi(x_{i},x_{j}|\theta_x)=\exp(\theta_x x_{i}x_{j})$
for each clique.} In this paper we concentrate on the case
where $\mathbf{N}$ is chosen so that the variables are arranged in
a two-dimensional grid. The study of these models often centres around
the phase change that the model exhibits when $\theta_{x}$ passes
a critical value: a small $\theta_{x}$
gives a small preference for neighbouring nodes to be the same and
results in a high probability that localised groups of variables take
on the same value. When $\theta_{x}$ increases above the critical
value the joint distribution has the vast majority of its probability
mass on the cases where almost every variable takes on the same value.

The Ising model and its generalisation, the Potts model (where the variables can take on more than two states), \foreignlanguage{english}{are
frequently used in analysing spatially structured data, especially
images. }In these applications it is usually the case that the random
vector $x$ is observed indirectly through observations $y$. To model
the noisy image data used later in the paper, we assume that each
$x$ variable has a corresponding $y$ variable representing a noisy
observation of the $x$ variable. We then take the joint over the
$x$ and $y$ variables to be: \foreignlanguage{english}{
\begin{equation}
f(x,y|\theta)=f(x|\theta_{x})g(y|x,\theta_{y})=\frac{1}{Z(\theta_{x})}\exp\left[\theta_{x}\sum_{(i,j)\in\mathbf{N}}x_{i}x_{j}\right]\frac{1}{Z(\theta_{y})}\exp\left[\theta_{y}\sum_{k=1}^{T}x_{k}y_{k}\right],\label{eq:general_gm}
\end{equation}
where $T$ is the number of pixels in the image, $Z(\theta_{y})=\left(\exp(\theta_{y})+\exp(-\theta_{y})\right)^{T}$
is the normalising constant for all of the potentials on the $(x_{k},y_{k})$
pairs.}

\subsubsection{Exponential Random Graphs}

This section considers the most popular model for social networks:
the \emph{exponential random graph model} (ERGM) or $p^{*}$ model
\citep{Wasserman1996}. These models are also used in physics and
biology and aim to model data consisting of nodes and edges, which
in the social network context represent \emph{actors} and \emph{relationships}
between these actors. There has been relatively little work on using
Bayesian inference for inferring the parameters of ERGMs aside from recent
papers by \citet{Koskinen2010,Caimo2011}.

For an ERGM, the random vector $x$ is defined over the space of all
graphs on a set of nodes, with each variable in $x$ representing
the presence or absence of a particular edge ($x_{k}$ takes value
1 when edge $k$ is present and 0 when edge $k$ is absent). An ERGM
for a graph (which consists of a set of edges over the set of nodes)
is then given by:
\begin{equation}
f(x|\theta_{x})=\frac{1}{Z(\theta_{x})}\exp(\theta_{x}^{T}S(x))\label{eq:ergm}
\end{equation}
where $S(x)$ is a vector of statistics of the
graph (e.g. the number of edges, triangles, etc) and $\theta_{x}^{T}$ denotes the transpose of the parameter vector $\theta_{x}$. The normalising
constant is intractable due to the extremely large number of possible
graphs. ERGMs are simply MRFs defined on the edge space of networks
\citep{Frank1986}, with statistics of the network corresponding to
particular cliques in the MRF.

We consider the case where the $y$ variables
represent noisy observations about the existence of
each edge in the network. Analogous to equation (\ref{eq:general_gm})
we use the model
\begin{equation}
f(x,y|\theta)=f(x|\theta_{x})g(y|x,\theta_{y})=\frac{1}{Z(\theta_{x})}\exp(\theta_{x}^{T}S(x))\frac{1}{Z(\theta_{y})}\exp\left[\theta_{y}\sum_{k=1}^{T}(2x_{k}-1)(2y_{k}-1)\right].\label{eq:ergm_noise}
\end{equation}
We note that the algorithms described in this paper are also applicable
to the alternative missing data scenarios described in \citet{Koskinen2010}.  In fact, the only specific detail that is required to use the methods in this paper is the factorisation of $x$ as an MRF: the space in which $\theta$ and $y$ lie affects only minor implementation details.

\selectlanguage{english}%

\subsection{Inference\label{sub:Inference}}

\subsubsection{Posterior Inference\label{sub:Posterior-inference}}

Recall that the posterior distribution of interest is $p(\theta|y)$.
We now consider standard Monte Carlo methods for sampling from this
distribution, under MRF models such as those described
in the previous section. Throughout the algorithmic sections of this
paper we consider the general case where both the observed variables
$y$ and unobserved variables $x$ are jointly distributed according
to an MRF, with distribution $f(x,y|\theta)$.

\selectlanguage{british}%
The high dimensionality of $x$ leads to the use of Markov chain Monte
Carlo (MCMC) for performing the integration in equation (\ref{eq:posterior}).
Specifically, MCMC is used to simulate from $p(\theta,x|y)$, projecting
the obtained points into the marginal space in order to obtain a sample
from $p(\theta|y)$. In these circumstances, since it is unclear as
to how to propose an efficient move in the joint space of $\theta$
and $x$, the standard approach is to use an MCMC algorithm that updates
$\theta$ and $x$ separately through moves on their full conditional
distributions (see \citet{Murray2004,Friel2009,Koskinen2010}, for
example). We shall refer to this method as the {}``data augmentation''
(DA) approach.

Superficially the DA approach may seem to be extremely simple, but
for the models described here there are three major difficulties with
using this method:
\selectlanguage{english}%
\begin{enumerate}
\item \textbf{Sampling from }$p(\theta|x,y)$\textbf{ is difficult,} since
this requires the evaluation of the \emph{intractable normalising
constant} in equation (\ref{eq:Z}).
\item \textbf{Sampling from }$p(x|\theta,y)$\textbf{ can be difficult,}
since this is usually a distribution with a complicated structure
on a high-dimensional space.
\item \textbf{Using a data augmentation approach may be inefficient, }since
$x$ and $\theta$ are often highly dependent \emph{a posteriori}.
\end{enumerate}
\selectlanguage{british}%
These difficulties largely dictate the structure of the remainder
of the paper. In section \ref{sec:Exchange-marginal-PMCMC} we devise a novel MCMC algorithm for sampling from $p(\theta|y)$
that is designed to minimise the effect of each of these problems,
combining the \emph{exchange algorithm} of \citet{Murray2006} with
\emph{marginal particle Markov chain Monte Carlo} (PMCMC) \citep{Andrieu2010}.
In section \ref{sec:Approximate-Bayesian-computation} explore an alternative approach to sampling from $p(\theta|y)$,
using \emph{approximate Bayesian computation} (ABC) \citep{Pritchard1999}
to largely circumvent the difficulties listed above. However, this
approach introduces complications of its own. Section \ref{sec:Applications} applies both approaches to data, with a comparison to the standard DA approach.

Throughout the paper we make use of a single site Gibbs sampler on
$p(x|\theta,y)$, and for this reason we devote some space in the
remainder of this section to describing this algorithm.

\selectlanguage{english}%

\subsubsection{Gibbs Samplers for MRFs\label{sub:Gibbs-samplers-for}}

\selectlanguage{british}%
The simplest approach to sampling from $p(x|\theta,y)$ for a MRF model using MCMC is to use single component Metropolis, where the variables
are updated one at a time. Variable $k$ is updated using a Metropolis-Hastings
step targeting the full conditional $p(x_{k}|x_{\neq k},\theta,y)\propto\prod_{C_{j}\ni x_{k}}\Phi_{j}(x,y\in C_{j}|\theta)$, where the product is over the potentials of all cliques that contain
variable $k$. When there are strong dependencies between the variables
this results in an inefficient MCMC algorithm. A partial solution
might be to update highly correlated variables in blocks: \citet{Andrieu2010}
note that to create an efficient algorithm the size of the blocks
of variables must be limited, since it becomes more increasingly difficult
to design good proposals for a block as the size of the block grows.
The inefficiency of Gibbs samplers for Ising models is well known
(e.g. \citet{Higdon1998}) and a similar observation is made in the literature
on ERGMs \citep{Snijders2002}. Similar approaches, with the same drawbacks, are commonly used for sampling from $p(x|\theta)$, which is encountered both in the exchange and ABC algorithms later in the paper. The inefficiency of these approaches motivates the use
of SMC samplers, where we observe
a significant improvement over the Gibbs sampler.

\section{EXCHANGE MARGINAL PMCMC\label{sec:Exchange-marginal-PMCMC}}

This section describes the design of an MCMC algorithm for simulating
from $p(\theta|y)$ that addresses the problems listed in section
\ref{sub:Posterior-inference}. We begin in section \ref{sub:Intractable-normalising-constant}
by describing methods for sampling from $p(\theta|x,y)$ that avoid
the evaluation of the intractable term $Z(\theta)$ in equation
(\ref{eq:Z}). Our favoured approach is the exchange algorithm of \citet{Murray2006}:
an algorithm that requires exact simulation from $f(x,y|\theta)$
which is not generally possible in the models we consider. We study
the effect of replacing exact simulation with the use of MCMC.

The remainder of the section focuses on the use of marginal PMCMC
to sample efficiently in cases where $\theta$ and $x$ are highly
dependant \emph{a posteriori}. In section \ref{sub:The-pseudo-marginal-approach}
we introduce PMCMC and describe how to apply it to MRFs, using the
exchange algorithm to avoid the intractable normalising constant.
The key component of a PMCMC algorithm is an efficient SMC sampler
for simulating from $p(x|\theta,y)$ and we describe candidate approaches
in section \ref{sub:SMC-samplers-for-1}.

\subsection{Intractable Normalising Constant\label{sub:Intractable-normalising-constant}}

Consider the use of a Metropolis-Hastings (MH) update on the full
conditional $p(\theta|x,y)\propto p(\theta)f(x,y|\theta)$. Let $\gamma(x,y|\theta)=\prod_{j=1}^{M}\Phi_{j}(x,y\in C_{j}|\theta)$ so that $f(x,y|\theta)=\frac{1}{Z(\theta)}\gamma(x,y|\theta)$. Using
a proposal $q(\theta^{*}|\theta)$, we obtain the acceptance probability:
\begin{equation}
1\wedge\frac{p(\theta^{*})\gamma(x,y|\theta^{*})q(\theta|\theta^{*})}{p(\theta)\gamma(x,y|\theta)q(\theta^{*}|\theta)}\frac{Z(\theta)}{Z(\theta^{*})},\label{eq:Z_accept}
\end{equation}
which cannot be calculated due to the presence of $Z(.)$, leading such models to be known as \emph{doubly intractable}
\citep{Murray2006}. A similar problem is encountered when using maximum
likelihood, or any other method that need to evaluate the likelihood.
There are several different classes of approaches for avoiding this:
pseudo-likelihood \citep{Besag1975}; variational approximations \citep{Murray2004};
Monte Carlo approximations \citep{Geyer1992b,Green2002,Atchade2008};
auxiliary variable approaches \citep{Moller2006,Murray2006}; and
ABC \citep{Grelaud2009a}\foreignlanguage{english}{.} Each of the methods
results in targeting an approximation to $p(\theta|x,y)$, unless
exact simulation from $f(x,y|\theta)$ is available. We focus on auxiliary
variable methods and ABC due to their ease of implementation and because
we find that using an MCMC algorithm in place of \foreignlanguage{english}{exact
simulation from $f(x|\theta)$ has little practical effect on the
examples we consider.}

\selectlanguage{english}%

\subsubsection{The Single Auxiliary Variable Method\label{sub:The-single-auxiliary}}

The first MCMC method that targets the true posterior in the presence
of the intractable normalising constant was the single
auxiliary variable method (SAVM), which originated in \citet{Moller2004} and \citet{Moller2006}.
Here we present an alternative justification of SAVM
to that in the original paper (given in \citet{Andrieu2010}), based on the principle that it is possible to design an MCMC algorithm
for simulating from some target $\pi(\text{\ensuremath{\theta})}$
when only unbiased positive estimates $\widehat{\pi}(\theta)$ of
an unnormalised version of $\pi$ are available. Specifically, a standard
Metropolis-Hastings algorithm that targets $\widehat{\pi}$, with
proposal $q(\theta^{*}|\theta)$ and acceptance probability
\begin{equation}
1\wedge\frac{\widehat{\pi}(\theta^{*})}{\widehat{\pi}(\theta)}\frac{q(\theta|\theta^{*})}{q(\theta^{*}|\theta)},\label{eq:approx_mh}
\end{equation}
has a stationary distribution of $\pi(\text{\ensuremath{\theta})}$.
Recall now the MH algorithm targeting $p(\theta|x,y)$ with
acceptance probability given in equation (\ref{eq:Z_accept}): equation (\ref{eq:approx_mh}) implies that a Metropolis-Hastings algorithm targeting an
unbiased estimate $\widehat{p}(\theta|x,y):=p(\theta)\gamma(x,y|\theta)\widehat{1/Z}(\theta)$
of $p(\theta|x,y)$, where $\widehat{1/Z}(\theta)$ is an unbiased
estimate of $1/Z(\theta)$, will have a stationary distribution of
$p(\theta|x,y)$.

SAVM targets a joint distribution $\pi(\theta,u|x,y)\propto q_{\mbox{u}}(u|\theta,x,y)p(\theta)\gamma(x,y|\theta)/Z(\theta)$, of which $p(\theta|x,y)$
is a marginal, constructed through the introduction of an auxiliary
variable $u$, where $ $$q_{\mbox{u}}(.|\theta,x,y)$ is an arbitrary distribution
and $u=(u_{x},u_{y})\in X \times Y$. This target is then simulated using
a Metropolis-Hastings algorithm on the $(\theta,u)$ space, with proposal
$\theta^{*}\sim q(.|\theta)$ and $u^{*}\sim f(.|\theta^{*})$ (the
simulation of $u^{*}$ being performed using exact sampling). The
acceptance probability for this move is given by:\foreignlanguage{english}{
\begin{equation}
1\wedge\frac{p(\theta^{*})\gamma(x,y|\theta^{*})q(\theta|\theta^{*})q_{u}(u^{*}|\theta^{*},x,y)\gamma(u|\theta)}{p(\theta)\gamma(x,y|\theta)q(\theta^{*}|\theta)q_{u}(u|\theta,x,y)\gamma(u^{*}|\theta^{*})}.\label{eq:sav}
\end{equation}
}

Comparing equations (\ref{eq:Z_accept}) and (\ref{eq:sav}) we see that
essentially SAVM is using single point importance sampling (IS)
estimates of $\widehat{1/Z}(\theta)$ and $\widehat{1/Z}(\theta^{*})$,
where $\widehat{1/Z}(\theta)=q_{u}(u|\theta,x,y)/\gamma(u|\theta)$
and $\widehat{1/Z}(\theta^{*})=q_{u}(u^{*}|\theta^{*},x,y)/\gamma(u^{*}|\theta^{*})$.
Linking this to the argument based on equation (\ref{eq:approx_mh})
gives us an alternative justification of the algorithm which suggests the following generalisation: \emph{any}
algorithm that finds an unbiased estimate of $\widehat{1/Z}(\theta)$
may be used in place of IS. For example, an SMC sampler
with $\pi_{1}=f(.|\theta)$ and $\pi_{T}=q_{u}(u|\theta,x,y)$ will
provide such an estimate whose efficiency is less dependent on the
choice of $q_{u}$ than is the IS estimate.  Note that SMC samplers play an important role later in the paper, and we refer the reader unfamiliar with the method to \citet{DelMoral2006} for further details.

\subsubsection{The Exchange Algorithm\label{sub:The-exchange-algorithm-2}}

In \citet{Murray2006}, the exchange algorithm is initially motivated
by the observation that in SAVM the ratio $Z(\theta)/Z(\theta^{*})$
is estimated indirectly, using the ratio of $\widehat{1/Z}(\theta^{*})$
and $\widehat{1/Z}(\theta)$. \citet{Murray2006} suggests that estimating
$Z(\theta)/Z(\theta^{*})$ directly may lead to an improved algorithm.
The resulting algorithm is actually most clearly formulated (for our
purposes) in an alternative fashion. Consider the target distribution $\pi(\theta,\theta^{*},u|x,y)=f(x,y|\theta)p(\theta)q(\theta^{*}|\theta)f(u|\theta^{*})/p(x,y)$ of which $\pi(\theta|x,y)$ is a marginal, where $u=(u_{x},u_{y})\in\mathcal{X}\times\mathcal{Y}$.
At iteration $i$, the exchange algorithm iterates the following two
operations (given input $(\theta(i-1),\theta^{*}(i-1),u(i-1))=(\theta,\theta^{*},u)$
from the previous iteration), giving an MCMC algorithm that targets
$\pi(\theta,\theta^{*},u|x,y)$:
\begin{enumerate}
\item Draw $\theta^{*}\sim q(.|\theta(i-1))$ and $u\sim f(.|\theta^{*})$.
\item Let $(\theta(i),\theta^{*}(i),u(i))=(\theta^{*},\theta(i-1),u)$ with
probability:
\begin{equation}
1\wedge\frac{p(\theta^{*})\gamma(x,y|\theta^{*})q(\theta|\theta^{*})\gamma(u|\theta)}{p(\theta)\gamma(x,y|\theta)q(\theta^{*}|\theta)\gamma(u|\theta^{*})}\frac{Z(\theta)Z(\theta^{*})}{Z(\theta^{*})Z(\theta)}=1\wedge\frac{p(\theta^{*})\gamma(x,y|\theta^{*})q(\theta|\theta^{*})\gamma(u|\theta)}{p(\theta)\gamma(x,y|\theta)q(\theta^{*}|\theta)\gamma(u|\theta^{*})},\label{eq:exchange_alg}
\end{equation}
otherwise set $(\theta(i),\theta^{*}(i),u(i))=(\theta(i-1),\theta^{*}(i-1),u)$.
\end{enumerate}
This deterministic move simply proposes a swap of $\theta$ and $\theta^{*}$,
hence the name {}``exchange algorithm''. The proof that this is
the appropriate acceptance probability for this move is given in \citet{Tierney1998},
where it is established that the acceptance probability for a deterministic
move $\theta \rightarrow T(\theta)$ on a target $\pi$ is $1\wedge\pi(T(\theta))/\pi(\theta)$. If we now compare equation (\ref{eq:Z_accept}) with equation (\ref{eq:exchange_alg})
we see that $\gamma(u|\theta)/\gamma(u|\theta^{*})$ can be thought
of as an IS estimate of $Z(\theta)/Z(\theta^{*})$,
so that the exchange method can be interpreted as using
an estimate of the acceptance probability. However, we note that in
this case we only have proof that this estimate in particular results
in an algorithm that targets the correct distribution - alternative
estimates that give a better estimate of this ratio may not be appropriate (although the similar approach of
\citet{Koskinen2008} is also shown to have the correct
target). The exchange method exhibits superior performance to SAVM in \citet{Murray2006}, and it is also easier
to implement - in SAVM there are more algorithmic choices
to make and these can severely affect the performance of the algorithm.
For these reasons we focus on the exchange algorithm for the remainder
of the paper, although we note that if the free choices in SAVM are well made (as in \citet{Moller2004}), it may outperform the exchange method.

Although the exchange algorithm targets the true posterior, it is
less efficient than a standard MH algorithm would be if the true target
were available. This inefficiency is accentuated if the proposed $u\sim f(.|\theta^{*})$
has only a small probability of being generated under $\theta$. \citet{Murray2006}
describes an extended exchange method to improve this inefficiency
whilst maintaining the exactness of the algorithm, using annealed IS \citep{Neal2001} as a substitute for the importance estimate
$\gamma(u|\theta)/\gamma(u|\theta^{*})$.  In the applications described in section \ref{sec:Applications} we found the use of this extended method to be essential in obtaining a reasonable performance for the methods described in the paper.  The central idea of the extended method is to move the proposed $u\sim f(.|\theta^{*})$  through a sequence of transitions so that it has a larger probability of being generated under $\theta$ (a similar idea may be used to improve SAVM in a similar way).  This is done by introducing a sequence of $K$ target distributions $f_k(.|\theta,\theta^*)\propto \gamma_k(.|\theta,\theta^*) = \gamma(.|\theta^{*})^{\beta_k}\gamma(.|\theta)^{1-\beta_k}$ where, for example, $\beta_k=(K-k+1)/(K+1)$ so that the sequence of targets provides a ``route'' from $f(.|\theta^{*})$ to $f(.|\theta)$. The extended exchange algorithm moves the initial point $u\sim f(.|\theta^{*})$ via a sequence of transitions $R_k(u'|u,\theta,\theta^*)$ that satisfy detailed balance with $f_k(u'|\theta,\theta^*)$, and changes the acceptance probability accordingly:

\begin{enumerate}
\item Draw $\theta^{*}\sim q(.|\theta(i-1))$ and $u_0\sim f(.|\theta^{*})$.
\item Apply the sequence of transitions: $u_1\sim R_1(u_1|u_0,\theta,\theta^*)$, $u_2\sim R_2(u_2|u_1,\theta,\theta^*)$, ... , $u_K\sim R_K(u_K|u_{K-1},\theta,\theta^*)$.
\item Let $(\theta(i),\theta^{*}(i),u(i))=(\theta^{*},\theta(i-1),u_K)$ with
probability:
\begin{equation}
1\wedge\frac{p(\theta^{*})\gamma(x,y|\theta^{*})q(\theta|\theta^{*})}{p(\theta)\gamma(x,y|\theta)q(\theta^{*}|\theta)}\prod_{k=0}^K \frac{\gamma_{k+1}(u_k|\theta,\theta^*)}{\gamma_{k}(u_k|\theta,\theta^*)} ,\label{eq:exchange_alg}
\end{equation}
otherwise set $(\theta(i),\theta^{*}(i),u(i))=(\theta(i-1),\theta^{*}(i-1),u_K)$.
\end{enumerate}

For our applications we have the factorisation $f(x,y|\theta)=f(x|\theta)g(y|x,\theta)$,
where $f(x|\theta)=\gamma(x|\theta)/Z(\theta)$ has an intractable
normalising constant and $g(y|x,\theta)$ is normalised. In this situation the exchange algorithm is
simplified slightly since $u_{y}$ does not need to be simulated.

\subsubsection{The Exchange Algorithm Without Exact Simulation\label{sub:The-exchange-algorithm}}

In step 1 of the exchange algorithm, on sampling $u$, exact simulation
from the likelihood is required. However, aside from a few special
cases (one of which is the Ising model, which can be sampled exactly using ``coupling from the past'' \citep{Propp1996}) this is not generally possible
for MRFs. \citet{Caimo2011} choose to approximate the exact simulation
by sampling $u$ from $f(.|\theta^{*})$ using an MCMC run that is
{}``long enough'' to get a point that can be treated as if it were
simulated exactly from $f(.|\theta^{*})$. We refer to this approach
as the \emph{approximate exchange algorithm}, and use $M$ to denote
the number of iterations of the MCMC algorithm used to simulate approximately from the likelihood. \citet{Caimo2011}
suggest that 500 iterations is a long enough run for models similar
to those studied in this paper, a conclusion supported by own study which suggests that as few as 50 or 100 iterations are usually
sufficient. In appendix B in the supplemental materials we give theoretical justification for the validity of this approach, proving that (using
a similar method to a proof in \citet{Andrieu2009}) when the MCMC
kernel for the exact exchange algorithm is uniformly ergodic, the
invariant distribution (when it exists) of the corresponding approximate
exchange algorithm becomes closer to the {}``true'' target (that
of the exact exchange algorithm) with increasing $M$. We also characterise
the rate of convergence of the approximate kernel. The same proof
justifies the use of an MCMC kernel as a substitute for simulating
exactly from the likelihood within SAVM.

\subsection{Marginal PMCMC\label{sub:The-pseudo-marginal-approach}}

\subsubsection{The Pseudo-Marginal Approach}

Our primary concern in this section is problem 3 in section \ref{sub:Posterior-inference}: that the data
augmentation approach to obtaining a sample from $p(\theta|y)$ is
inefficient when $\theta$ and $x$ are highly dependant \emph{a posteriori}.
The phase change in Ising models means that this dependence
is particularly clear in these models, but other MRFs (including
ERGMs \citep{Snijders2002}) also have this property. \selectlanguage{english}%
\citet{Beaumont2003} and \citet{Andrieu2009} describe an alternative
approach to sampling from $p(\theta|y)$ that is designed to avoid
the problems caused by this dependence: approximating
the {}``ideal'' algorithm that uses an MH update by replacing $p(\theta|y)$ with unbiased estimates of the form $\widetilde{p}(\theta|y)$ to $p(\theta|y)$. The argument
described in section \ref{sub:The-single-auxiliary} tells us that
such updates actually provide us with points from the desired target
$p(\theta|y)$. Such an approach is referred to as a \emph{pseudo-marginal}
approach. The efficiency of the MCMC chain based on these updates
depends on the variance of the estimator $\widetilde{p}(\theta|y)$.

The simplest useful estimator $\widetilde{p}(\theta|y)$ is an IS approximation $\widetilde{p}^{N}(\theta|y)=\frac{1}{N}\sum_{k=1}^{N}p(\theta,x^{(k)}|y)/q(x^{(k)}|\theta)$,
where $x^{(k)}\sim q(.|\theta)$. However, for the applications about
which we are interested, where $x$ is high-dimensional, it is difficult
to define a proposal distribution $q$ that results in an estimator
with a small variance. The optimal proposal is $p(x|\theta,y)$ (see \citep{Geyer2011}, for example): which we cannot sample from directly. The remainder of this section is devoted to using SMC samplers for this task. The framework of PMCMC in \citet{Andrieu2010} then tells us,
via the pseudo-marginal approach, how to use SMC samplers to more
efficiently obtain samples from $p(\theta|y)$. In section \ref{sub:Marginal-PMCMC}
we describe the marginal PMCMC framework, then in section \ref{sub:Exchange-marginal-PMCMC-1}
we describe the use of marginal PMCMC on MRFs.

\selectlanguage{british}%

\subsubsection{Marginal PMCMC\label{sub:Marginal-PMCMC}}

As in \citet{Andrieu2010}, in this section we make the assumption (relaxed in the following section) that the normalising constant $Z$ of $f(x,y|\theta)$
is independent of $\theta$ so that the joint posterior is given by:
$p(\theta,x|y)=p(\theta)\gamma(x,y|\theta)/Zp(y)$. We describe the marginal PMCMC algorithm briefly here - for a thorough
description see \citet{Andrieu2010}. The algorithm is an MCMC sampler
that targets $ $$p(\theta,x|y)$, operating on the factorisation
$p(\theta|y)p(x|\theta,y)$. The intuition behind the approach is
in devising a move on the joint $(\theta,x)$ space: in terms of moving
around the $\theta$ space it would be most efficient if the proposal
could take the form $q(\theta^{*}|\theta)p(x^{*}|\theta^{*},y)$,
so that $x^{*}$ is {}``perfectly adapted'' to the proposed $\theta^{*}$.
The approach is to devise an algorithm that approximates this idealised
situation, using an SMC sampler as a statistically efficient method
for simulating from an approximation $\widehat{p}(x^{*}|\theta^{*},y)$
to $p(x^{*}|\theta^{*},y)$. The point from $\widehat{p}(x^{*}|\theta^{*},y)$
is then used as a proposed point in a Metropolis-Hastings algorithm
that targets $ $$p(x^{*}|\theta^{*},y)$. \citet{Andrieu2010} derive
the acceptance probability that should be used by
explicitly writing down the target and proposal distributions involved. The
resulting algorithm in the case we describe here proceeds as follows.

At iteration $i$, beginning with the point $(\theta(i-1),x(i-1))$
and the estimate $\widehat{\phi}(i-1)$ outputted from iteration $i-1$:
\begin{enumerate}
\item Simulate $\theta^{*}\sim q(.|\theta(i-1))$, where $q$ is some proposal
distribution.
\item Run an SMC sampler on the $x$ space, with the final (unnormalised)
distribution as $\pi_{T}(x)=\gamma(x,y|\theta^{*})$. This gives a
particle approximation $\widehat{p}(x|\theta^{*},y)$ to $p(x|\theta^{*},y)$
and an estimate $\widehat{\phi}(\theta^{*},y)$ of its normalising
constant, $\phi(\theta^{*},y)=Zp(y|\theta^{*})$.
\item Sample a single point $x^{*}$ from $\widehat{p}(.|\theta^{*},y)$.
\item Set $(\theta(i),x(i),\widehat{\phi}(i))=(\theta^{*},x^{*},\widehat{\phi}(\theta^{*},y))$
with probability
\begin{equation}
1\wedge\frac{p(\theta^{*})\widehat{\phi}(\theta^{*},y)}{p(\theta)\widehat{\phi}(i-1)}\frac{q(\theta(i-1)|\theta^{*})}{q(\theta^{*}|\theta(i-1))},
\end{equation}
otherwise set $(\theta(i),x(i),\widehat{\phi}(i))=(\theta(i-1),x(i-1),\widehat{\phi}(i-1))$.
\end{enumerate}

Here note the interpretation of the algorithm as a Metropolis-Hastings
algorithm targeting an unbiased approximation to $p(\theta|y)\propto p(\theta)\phi(\theta,y)=p(\theta)Z\int_{x}f(x,y|\theta)dx$.

\subsubsection{Exchange Marginal PMCMC Algorithm\label{sub:Exchange-marginal-PMCMC-1}}

Now consider the application of marginal PMCMC to cases when the normalising
constant of $f(x|\theta)$ is a function of $\theta$,
and cannot be evaluated. In this case the joint posterior is $p(\theta,x|y)=p(\theta)\gamma(x,y|\theta)/Z(\theta)p(y)$
and direct application of the marginal PMCMC algorithm results in
the presence of the intractable term $Z(\theta)/Z(\theta^{*})$ in
the acceptance ratio. The combination of marginal PMCMC with the exchange
algorithm results in the disappearance of this ratio. Let us introduce
the target $\pi(\theta,\theta^{*},u|y)=p(\theta)p(y|\theta)q(\theta^{*}|\theta)\gamma(u|\theta^{*})/Z(\theta^{*})p(y)$,
where $u=(u_{x},u_{y})$, as in section \ref{sub:The-single-auxiliary}.
We then use the exchange algorithm as follows.
\begin{enumerate}
\item Draw $\theta^{*}\sim q(.|\theta(i-1))$ and $u\sim f(.|\theta^{*})$.
\item Run an SMC sampler on the $x$ space, with the final (unnormalised)
distribution as $\pi_{T}(x)=\gamma(x,y|\theta^{*})$ in order to obtain
the particle approximation $\widehat{p}(x|\theta^{*},y)$ and an estimate
$\widehat{\phi}(\theta^{*},y)$ of its normalising constant, $\phi(\theta^{*},y)=Z(\theta^{*})p(y|\theta^{*})$.
\item Sample a single point $x^{*}$ from $\widehat{p}(.|\theta^{*},y)$.
\item Let $(\theta(i),\theta^{*}(i),x(i),\widehat{\phi}(i),u(i))=(\theta^{*},\theta(i-1),x^{*},\widehat{\phi}(\theta^{*},y),u)$
with probability:
\begin{equation}
1\wedge\frac{p(\theta^{*})\widehat{\phi}(\theta^{*},y)q(\theta|\theta^{*})\gamma(u|\theta)}{p(\theta)\widehat{\phi}(i-1)q(\theta^{*}|\theta)\gamma(u|\theta^{*})}\frac{Z(\theta)Z(\theta^{*})}{Z(\theta^{*})Z(\theta)}=1\wedge\frac{p(\theta^{*})\widehat{\phi}(\theta^{*},y)q(\theta|\theta^{*})\gamma(u|\theta)}{p(\theta)\widehat{\phi}(i-1)q(\theta^{*}|\theta)\gamma(u|\theta^{*})},\label{eq:EMPMCMC}
\end{equation}
otherwise set $(\theta(i),\theta^{*}(i),x(i),\widehat{\phi}(i),u(i))=(\theta(i-1),\theta^{*}(i-1),x(i-1),\widehat{\phi}(i-1),u)$.
\end{enumerate}
The acceptance probability in equation (\ref{eq:EMPMCMC}) is again
derived using a proof similar to those in \citet{Andrieu2010}, in
conjunction with the result on deterministic transformations, from
\citet{Tierney1998}, used previously. This proof can be found in appendix A in the supplemental materials. An extension using the extended
exchange algorithm described in section \ref{sub:The-exchange-algorithm-2}
is trivial. The results in \citet{Andrieu2010} also tell us that
the $(\theta,x)$ points that are generated are from the joint posterior
$p(\theta,x|y)$, and that the unused SMC points generated from $\widehat{p}(.|\theta^{*},y)$
can be recycled in Monte Carlo estimates based on the $x$ space.

Our framework is now in place: the EMPMCMC algorithm addresses each
of the issues raised in section \ref{sub:Posterior-inference}. The
efficiency of the algorithm is heavily dependant on the design of
the SMC sampler on the $x$ space, and we examine this issue in the
following section.

\subsection{SMC Samplers for MRFs\label{sub:SMC-samplers-for-1}}

The use of SMC samplers for simulating from hidden Markov models is
well understood, but there have been few attempts to use them on more
general graphical models. The exception is the work of \citet{Hamze2005},
which they describe in application to discrete or Gaussian models
with a pairwise factorisation. It is their methods, \emph{hot coupling}
and \emph{tempering}, that form the basis of our approach. The most
fundamental choice in the design of such an algorithm is the choice
of targets $\pi_{1:N}$ to use, where the first target is easy to
simulate from, the final target is the desired distribution and the
targets between these two provide a {}``route'' from the first target
to the last. It is this choice, rather than matters such as designing
the forward and backward kernels that we focus on here.

Hot coupling proceeds by setting $\pi_{1}$ in the SMC sampler
to be a spanning tree of the true graph (which can be easily sampled,
and whose normalising constant may be exactly calculated using \citet{Carter1994})
and then to add edges to the graph, one at a time (randomly chosen),
until the true graph is reached at $\pi_{N}$ (this scheme could be
generalised to non-pairwise MRFs by forming larger cliques as the
SMC sampler progresses). Tempering consists of choosing the sequence
$\pi_{n}=f(x,y|\theta)^{1/t_{n}}$, where $(t_{n})_{n=1}^{N}$ is
a decreasing sequence of {}``temperatures'' with $t_{N}=1$. For
an Ising model this sequence of distributions has a simple interpretation,
since we obtain $f(x,y|\theta_{x},\theta_{y})^{1/t_{n}}\propto f(x,y|\theta_{x}/t_{n},\theta_{y}/t_{n})$.
\citet{Hamze2005} suggest that hot coupling is often more effective:
our own empirical results support this, as long as data generated from the initial tree
is a good approximation to data generated from the final target. This is difficult to quantify in advance in general: the case of the Ising model illustrates this, where the use of a square lattice exhibits the phase change described previously. Our empirical investigation was based on results obtained in our two applications in section \ref{sec:Applications}.  In the Ising model in the first application, we observed that data generated from an initial tree had similar characteristics to that that from the full grid (although this may not hold as strongly for larger grids).  In the social network application that follows, we did not find that there was a smooth transition between the initial tree and the final MRF.  A possible cause is that in this case the latent MRF has the circular structure exhibited in \citet{Frank1986}, and that the addition of edges that complete this circle change the character of the data drawn from the graph.  For this reason, the tempering method was preferred in this latter application.  The tempering method is likely to perform more reliably across a range of applications (it is also regularly used in non-MRF applications of SMC samplers), but for some MRFs hot coupling can be a useful tool.

\section{APPROXIMATE BAYESIAN COMPUTATION\label{sec:Approximate-Bayesian-computation}}

\subsection{Introduction}

Approximate Bayesian computation (ABC) is a method designed, by \citet{Pritchard1999},
to perform Bayesian inference in cases where a likelihood $l(y|\theta)$
cannot be evaluated because it is unavailable or computationally intractable.
The basic idea is to approximate the likelihood by $\hat{l}_{\epsilon}(y|\theta)=\int_{y'}l(y'|\theta)\pi_{\epsilon}(y'|y)\mathrm{d}y$ where $\pi_{\epsilon}(y'|y)$ is a probability density on the same
space as $y$, and centred on $y$ with a concentration determined
by $\epsilon$. If $\epsilon=0$, then $\hat{l}_{\epsilon}(y|\theta)$
is equal to the true likelihood $l(y|\theta)$. This approximate likelihood
can be evaluated by a Monte Carlo approximation $\widetilde{l}_{\epsilon}(y|\theta)=\frac{1}{R}\sum_{r=1}^{R}\pi_{\epsilon}(y'^{(r)}|y)$ where $y'^{(r)}\sim l(.|\theta)$ and $R>0$. This likelihood
gives more weight to values of $\theta$ that are more
likely to generate data similar to $y$.

In practice, due to the possible high dimension of the data, a summary
statistic $S(y)$ is usually used in place of the data, giving a further
approximation (if the statistic is not sufficient) to the true likelihood
using:
\begin{equation}
\hat{l}_{\epsilon,S}(y|\theta):=\hat{l}_{\epsilon}(S(y)|\theta):=\int_{y'}l(S(y')|\theta)\pi_{\epsilon}(S(y')|S(y))\mathrm{d}y'.\label{eq:abc_approx_S}
\end{equation}

MCMC \citep{Marjoram2003a} and SMC samplers \citep{Sisson2007smc,DelMoral2008,Robert2011smc}
have been introduced to enable efficient exploration of the $\theta$
space when using an ABC approximation to the likelihood. In section
\ref{sec:Applications} we apply an ABC-SMC sampler to the problem
of parameter estimation in MRFs. The main advantage of this approach
is that it avoids the difficulties listed in section \ref{sub:Posterior-inference}.

For some of the models we consider we can take $\epsilon=0$, and
low-dimensional sufficient statistics sometimes exist: in these cases
it can be possible to devise an ABC algorithm that targets the correct
posterior distribution $p(\theta|y)$. However this is not the case in general. For higher dimensional
sufficient statistics it can require more computational effort to
obtain a useful sample when $\epsilon=0$. In some cases using $\epsilon=0$
is impracticable, and sufficient statistics are not available, and
in these cases it is only possible to obtain a sample from an approximation
to the true posterior with ABC. Usually the use of ABC methods involves
a trade off between the degree of approximation to the true posterior
and the computational effort required to obtain the sample. Thus tuning
ABC can be difficult and it is not easy to quantify the effect of
the approximation to the true posterior that is used. The intricacies
of using ABC are discussed further in the review of \citet{Marin2011}.

\subsection{Application to MRFs\label{sub:Application-to-MRFs}}

ABC has previously been applied to inferring the parameters of MRFs \citep{Grelaud2009a} - here we instead consider noisy
or incomplete data. 
It is particularly simple to apply ABC to the models that we focus
on in this paper, since both Ising models and ERGMs are defined in
terms of statistics of the data. However, as is the case when using the exchange algorithm on these problems,
it is not possible to exactly simulate from $l(.|\theta)$ and we
consider the effect of using for example, MCMC (as in \citet{Grelaud2009a})
as a substitute. The proof in appendix B in the supplemental materials
describes the effect of using an MCMC run of length $M$ for simulating
from $l(.|\theta)$ within an ABC-MCMC algorithm (as in \citet{Marjoram2003a}).
The result is analogous to that obtained for the exchange algorithm
and SAVM: the invariant distribution (when it exists) of
this method becomes closer to the {}``true'' target (the invariant
distribution of the standard ABC-MCMC algorithm) with increasing $M$.

In our models $f(x,y|\theta)$ factorises as $f(x,y|\theta)=f(x|\theta)g(y|x,\theta)$.
For a given $\theta$ the simulation $y'$ from the likelihood $l(.|\theta)$
is performed through first simulating $x'$ from $f(.|\theta)$, then
by simulating $y'$ from $g(.|x',\theta)$. In this situation we note
that $x'$ is always proposed from its prior $f(x|\theta)$, as opposed
to the posterior $p(x|\theta,y)$, therefore when the effect of the
data dominates that of the prior the exploration of the $x$ space
is not likely to be as efficient as that used as in the PMCMC approach
described in the previous section.

\section{APPLICATIONS\label{sec:Applications}}

\subsection{Methods}

In this section we apply each of the methods described in the paper
to two different data sets. The configuration of the algorithms that
we use has some commonality between each case, thus we begin by describing
these common aspects of the algorithms before describing the specifics
of their application in the relevant sections. Our implementation
is in Matlab, and makes use of the UGM package (available from Mark Schmidt's
website at \texttt{http://www.di.ens.fr/\textasciitilde{}mschmidt/Software/UGM.html}).

\subsubsection{Approximate Bayesian Computation}

\selectlanguage{english}%
The choices made in constructing the approximate likelihood in our ABC algorithms are always the same, up to the choice of summary statistics: we use the uniform kernel $\pi_{\epsilon}(S(y')|S(y))\propto\mathbf{1}_{\left\Vert S(y')-S(y)\right\Vert <\epsilon}$
as the data comparison function, where $\left\Vert .\right\Vert $
is the Euclidean norm.

Our choice of the uniform kernel allows us to use the adaptive ABC-SMC sampler of\foreignlanguage{british}{ \citet{DelMoral2008}
to generate weighted points from the posterior. This method adaptively
chooses a sequence of targets in the SMC sampler, where the $n$'th
target is the ABC posterior using the likelihood in equation (\ref{eq:abc_approx_S})
with tolerance $\epsilon_{n}$. The sequence $(\epsilon_{n})$ is
chosen by, for each $n$, taking the smallest tolerance that ensures
that a particular percentage of the particles is given non-zero weight. We always use 10000 particles, initialise the
ABC-SMC with $\epsilon_{1}=20$ and terminate it when $\epsilon_{n}$
is reduced to zero,} and resample at every iteration (which
is likely to be the most efficient option when using this ABC algorithm, since otherwise particles with zero weight are carried from one iteration of the sampler to the next)
using stratified resampling. This method uses an ABC-MCMC move as the forward
kernel within the SMC sampler, and we choose a MH move with a random
walk proposal.

\subsubsection{DA and PMCMC}

\selectlanguage{british}%
In our implementation of DA, the update of $\theta$ uses an MH step
whose proposal $q$ is a random walk with variance $sI$, where $s$ is some problem
specific scaling and $I$ is the identity matrix of the appropriate
dimension. To update $x$ we use $L$ sweeps of the single site Gibbs
sampler.

Our configuration of the marginal PMCMC algorithm is chosen to facilitate
easy comparison with the DA approach above. The forward kernel in
the SMC sampler is always chosen to be the single site Gibbs sampler,
and stratified resampling was performed when the effective sample size (ESS) dropped below
0.5 multiplied by the number of particles. To ensure that the computational
effort of a PMCMC iteration is comparable to a sweep of the DA algorithm,
we take the number of particles $P=\left\lfloor L/T\right\rfloor $,
where $T$ is the number of targets used in the SMC sampler within
the PMCMC. In every case, our figures show 4500 points generated by the algorithm,
after a burn in of 500 iterations.

\selectlanguage{english}%

\subsubsection{Simulation From the Likelihood and Exchange Algorithm}

The \foreignlanguage{british}{extended exchange algorithm (section
\ref{sub:The-exchange-algorithm}) is used when updating $\theta$
in both the DA and PMCMC algorithms; $B$ bridging targets are} used.
Both this use of the exchange algorithm and the ABC approach require
simulation from the likelihood $f(x|\theta)$. In all cases we use
1000 sweeps of the single site Gibbs sampler described in section
\ref{sub:Gibbs-samplers-for}. Note that the proof in appendix B in the supplemental materials
indicates that there is no restriction on choice of the initial point
$x_{0}$ of this Gibbs sampler. We simply chose $x_{0}=y$ (which
we are free to do since in this case $x$ and $y$ inhabit the same
space). This choice, where the same initial value is used at each
iteration, could potentially be improved by choosing the initial value
based on the previous run of the Gibbs sampler, however this has little
practical effect in our applications. Changing the initial value for each run of the Gibbs sampler necessitates
an alteration to the proof in the appendix, and this alteration is described in the appendix.

\selectlanguage{british}%
We note that all the methods contain the single site Gibbs sampler
(either for simulating from the likelihood, or for updating $x$),
which we know to be inefficient. More efficient MCMC moves could be
used in its place, for example the Swendsen-Wang algorithm for Ising
models \citep{Higdon1998} or, for the ERGM example in the next section,
the {}``tie no tie'' sampler used in \citet{Caimo2011}. This would
improve the efficiency of all of these algorithms, and for real applications
we would advocate such an approach. Here we have chosen not to investigate
these potential efficiency gains, concentrating instead on the impact
of avoiding the use of the DA algorithm.

\selectlanguage{english}%

\subsubsection{Reporting of Results}

The difference in performance of the algorithms in both applications
is large and can be observed through visualising the samples that are generated. Estimates of posterior means and variances
do not provide any useful information above plots of the samples,
and measures of the efficiency of the MCMC such as autocorrelation
times are not informative since some of the chains we wish to compare
are not close to stationarity. Thus we chose to only
represent the output of the samplers through plots of the samples
they produce (in the ABC-SMC algorithm the positions of the equally weighted particles after resampling are shown).

\selectlanguage{british}%

\subsection{Ising Model\label{sub:Markov-random-field}}

In this section we apply the methods described in the paper to inference
of the parameters of an Ising model. We use noisy observations of
a hidden $10\times10$ pixel two-dimensional grid, simulated from
the model in equation (\ref{eq:general_gm}) with parameters $\theta_{x}=0.1$
and $\theta_{y}=0.1$. This data is shown in figure \ref{fig:Observed-data.}.
Our prior over both $\theta_{x}$ and $\theta_{y}$ is uniform on
the interval $[0,3]$, and our model for the data is given by equation
(\ref{eq:general_gm}). Our goal is to simulate from the posterior $p(\theta_{x},\theta_{y}|y)$,
and in this section we compare ABC, DA and PMCMC algorithms.

Our ABC algorithm used two summary statistics of the data: \foreignlanguage{english}{$S_{1}(y)=\sum_{(i,j)\in\mathbf{N}}y_{i}y_{j}$
(the number of equivalently valued }neighbours\foreignlanguage{english}{)
and $S_{2}(y)=\sum_{i}y_{i}$ (the magnetisation)}. In the adaptive
ABC-SMC algorithm, we choose that 70\% of the particles were given
non-zero weight at each iteration. Points from $p(\theta_{x},\theta_{y}|y)$
using this method, which was terminated at  $n=12$, are shown in figure \ref{fig:Weighted-points-from}: the posterior
mass is distributed around the areas where $\theta_{x}$ is small
and/or $\theta_{y}$ is small. It is clear that this posterior will
pose a significant challenge to DA: in order to explore the posterior
fully it is necessary to sample both large and small values of $\theta_{x}$,
and moving from one side of the critical value of $\theta_{x}$ to
the other is difficult due to the dependence between $x$ and $\theta_{x}$. Points from $p(\theta_{x},\theta_{y}|y)$ using DA, in which we chose $s=1$ and $L=10^{5}$, are shown in figure \ref{fig:Points-simulated-from}.
The posterior dependency between $\theta_{x}$ and $x$, and the inefficiency
of the updates on $p(x|\theta,y)$ (even when using $10^{5}$ sweeps
of the Gibbs sampler at every iteration), inhibit the sampler from
moving below the critical value of $\theta_{x}$.

In this implementation of the marginal PMCMC algorithm, we use hot
coupling for the SMC sampler updates, adding a single edge at each
target (this results in a total of 82 targets, giving approximately
1200 particles). The simlated points are shown in
figure \ref{fig:Points-simulated-from-1}: the sampler
has explored the whole posterior, with no evidence of difficulty in
passing the critical value of $\theta_{x}$. Figure \ref{fig:Trace-plot-of}
shows the trace plot of $S_{1}(x)$ for both the DA and the PMCMC
methods, illustrating that the PMCMC approach enables exploration
of the $x$ space, whereas DA only explores graphs that correspond
to a large value of $\theta_{x}$. In both the DA and PMCMC extended
exchange algorithms, we use $B=100$ intermediate targets in the annealed
IS: fewer intermediate targets can result in poor
estimation of the ratio of normalising constants, and therefore inefficiency
in both algorithms.  If the original exchange algorithm is used ($B=1$), the results from the PMCMC method do not look dissimilar to those produced by the DA method: when there is a proposed change in $\theta_x$ that crosses the critical value, the move is rejected since the proposed $u\sim f(.|\theta^*)$ has a small probability of being generated under $\theta$.

The dominating factor in the computational cost of each algorithm is in the Gibbs sampler for the simulation from the $x$ space (with the target of either $p(x|\theta)$ or $p(x|\theta,y)$). For each iteration of the DA or PMCMC algorithm $10^5$ sweeps of the Gibbs sampler are performed at each iteration of the MCMC (along with an additional $10^3$ for the exchange algorithm), so the total cost of our run is approximately $5\times 10^8$ sweeps.  For each target of the ABC-SMC $10^7$ sweeps are performed, so the total computational cost is $1.2\times 10^8$ sweeps.

\begin{figure}[h]
\caption{Results for Bayesian parameter estimation in a hidden Ising model.\label{fig:Results-for-Bayesian}}

\subfloat[Observed data.\label{fig:Observed-data.}]{

\includegraphics[scale=0.35]{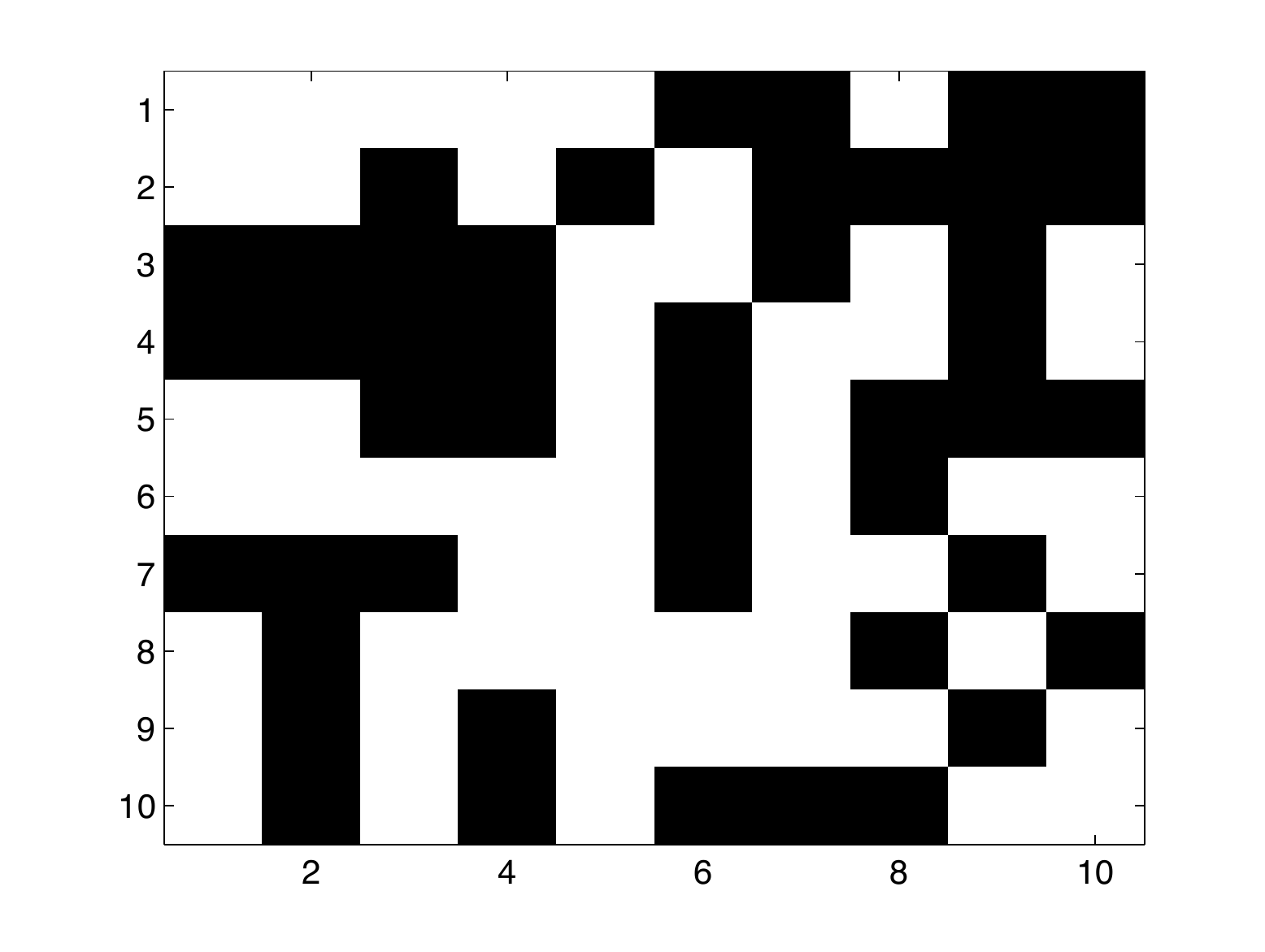}}\subfloat[Points from the posterior produced by the ABC-SMC algorithm.\label{fig:Weighted-points-from}]{

\includegraphics[scale=0.35]{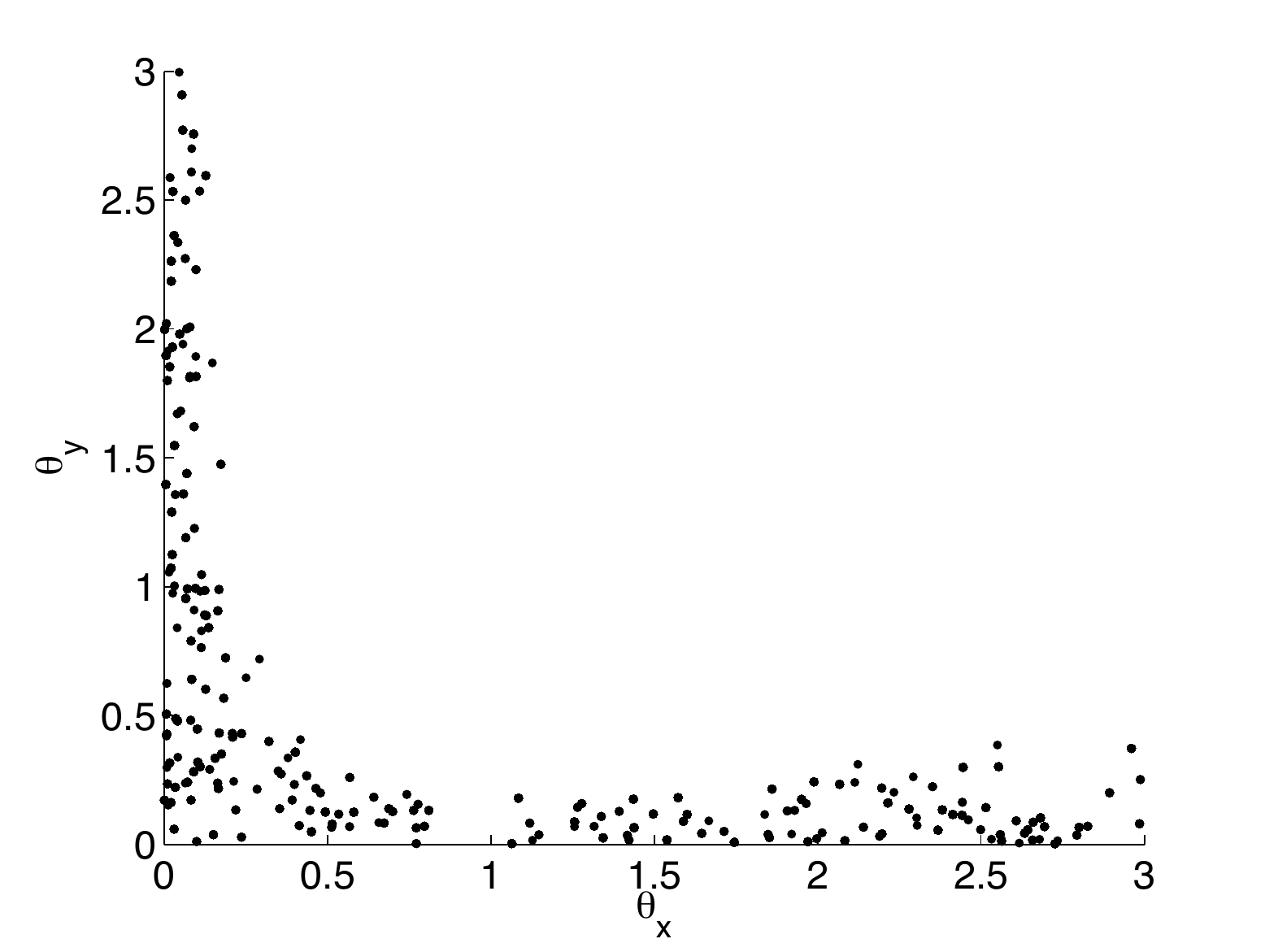}}\subfloat[Points simulated from the posterior using the DA exchange algorithm.\label{fig:Points-simulated-from}]{

\includegraphics[scale=0.35]{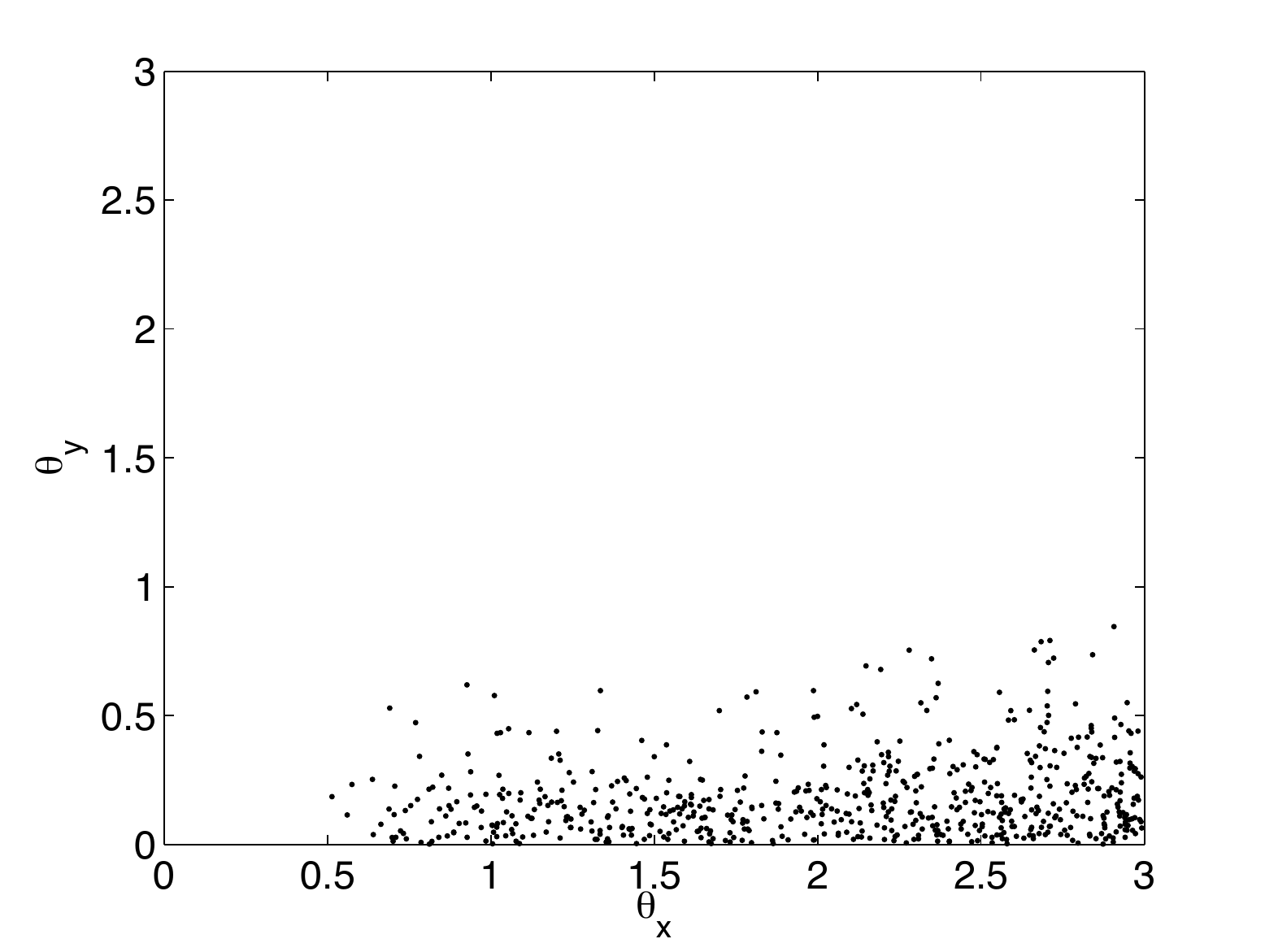}}

\subfloat[Points simulated from the posterior using the exchange marginal PMCMC
algorithm.\label{fig:Points-simulated-from-1}]{

\includegraphics[scale=0.35]{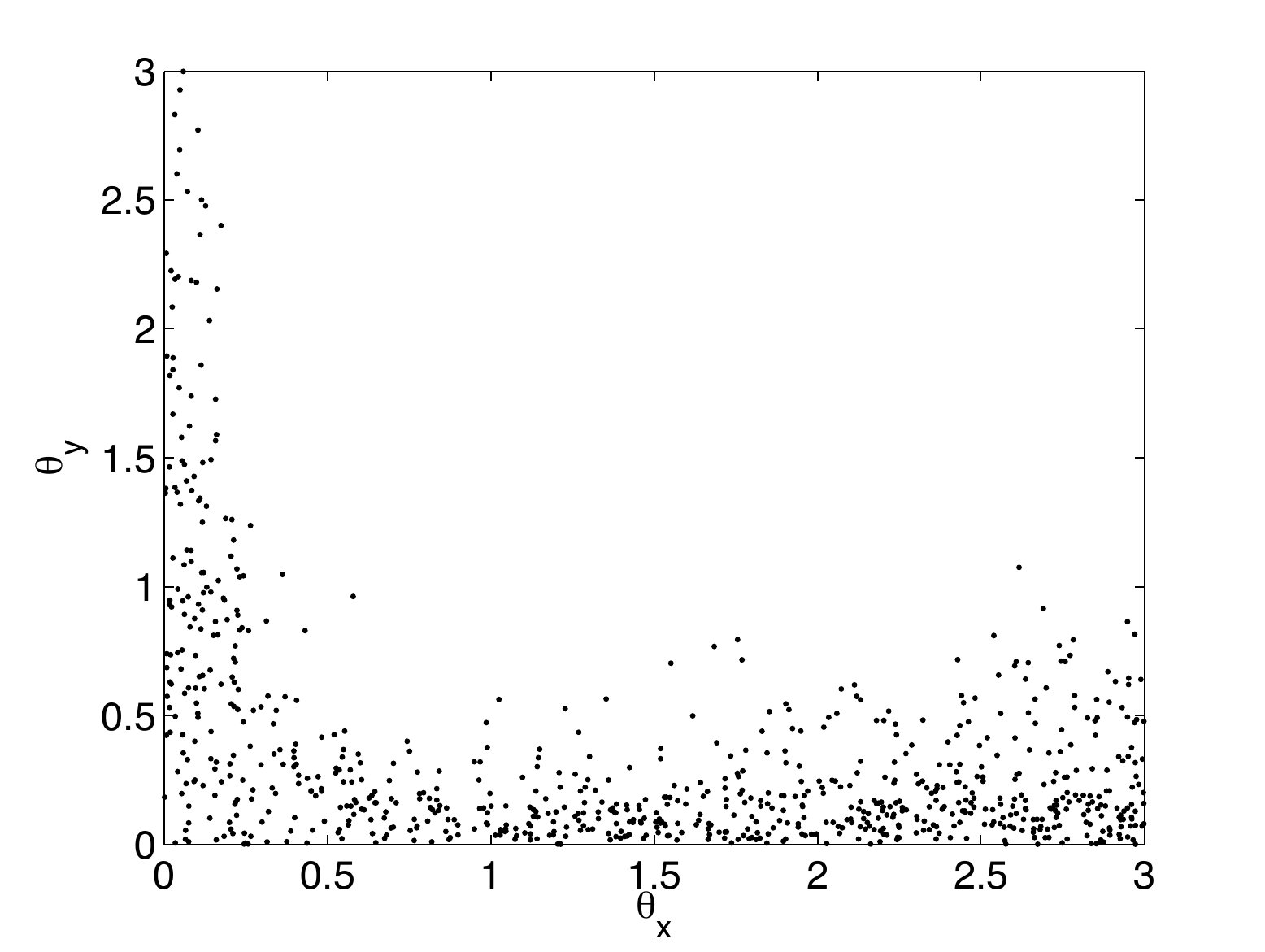}}\subfloat[Trace plot of the statistic $S_{1}(x)$, the number of identical neighbours
in the hidden field $x$, for the DA (black) and the PMCMC (grey)
algorithms.\label{fig:Trace-plot-of}]{

\includegraphics[scale=0.44]{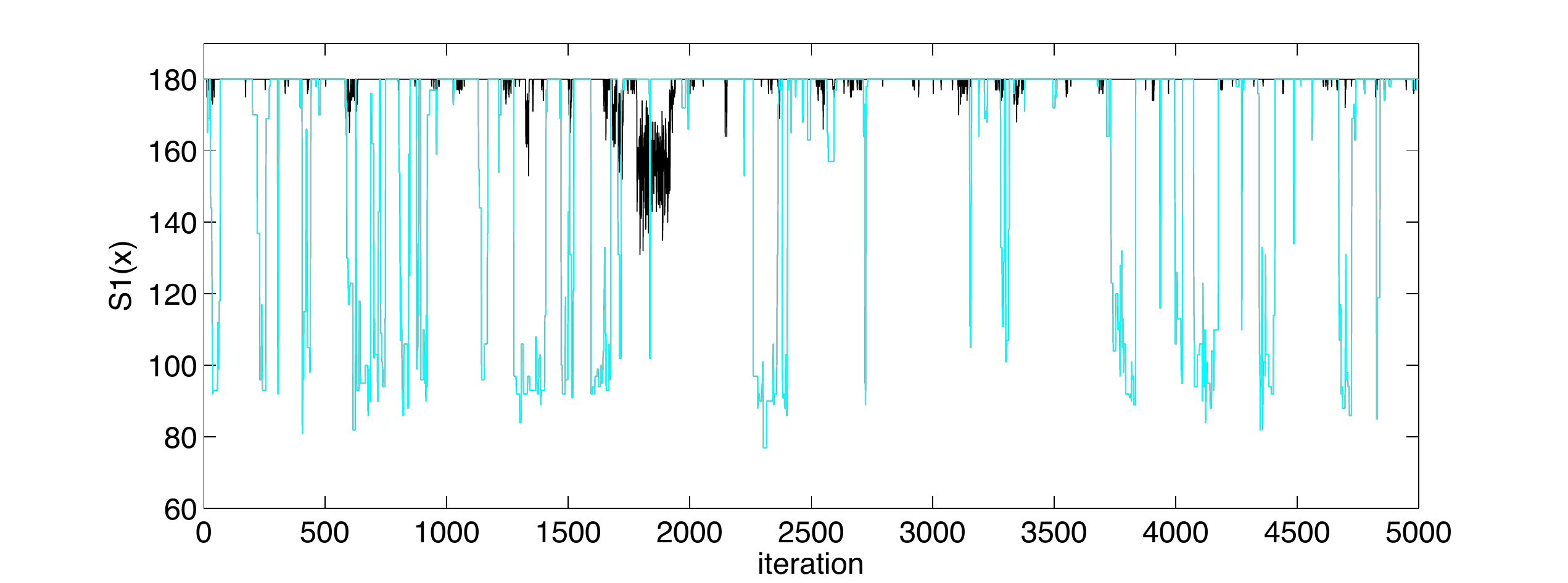}}
\end{figure}

\subsection{Social Network Data\label{sub:Social-network}}

We consider the application of our methods to the Florentine family
business graph studied in \citet{Caimo2011}, shown in figure \ref{fig:Observed-data.-1}.
We begin by assuming the network $x$ is directly observed and use
exactly the model for the data as that used in \citet{Caimo2011}:
using an ERGM (equation (\ref{eq:ergm})) with $S_{1}(x)=\sum_{i<j}x_{ij}$
(the number of edges) and $S_{2}(x)=\sum_{i<j<k}x_{ik}x_{jk}$ (the
number of 2-stars) , and prior on $\theta_{x}=(\theta_{1},\theta_{2})$
as $\theta_{x}\sim\mathcal{N}(0,30I_{2})$.

We begin by examining ABC as a direct alternative to the MCMC approach
in \citet{Caimo2011} (the DA and PMCMC approaches are not applicable here since there is no latent space). We use the statistics $S_{1}$ and $S_{2}$ as our summary of the data, and since these statistics
are sufficient, when $\epsilon=0$ the ABC posterior is equivalent
to the true posterior. In this implementation the scale of the target
changes dramatically across the iterations (since the posterior is
significantly tighter than the prior), thus we adaptively choose the
proposal variance in the MH move\foreignlanguage{english}{, so that
at iteration $n$, the variance is $2\widehat{\Sigma}_{n-1}$ where
$\widehat{\Sigma}_{n-1}$ is the sample variance of the particles
at iteration $n-1$ (as in \citet{Robert2011smc}). We choose that}
50\% of the particles were given non-zero weight at each iteration.
The ABC-SMC terminated at $n=16$, and weighted points drawn from
$p(\theta_{1},\theta_{2}|y)$ using this method are shown in figure
\ref{fig:Weighted-points-from-1}. These points are in
good agreement with the shape of the posterior shown in \citet{Caimo2011}.
We note that the {}``population'' nature of the SMC algorithm acts
as a substitute for the population MCMC method employed in
that paper.

Now consider the case where it is assumed that the observed graph,
now denoted by $y$, is a noisy observation of some underlying
graph $x$. Specifically, we use the model in equation (\ref{eq:ergm_noise}),
which account for noisy observations of the edges of the underlying
graph. We use a generalisation of the model described above,
defined on the extended parameter $\theta=(\theta_{1},\theta_{2},\theta_{y})$
with the same priors on $\theta_{1}$ and $\theta_{2}$, and with
a prior of $\theta_{y}\sim\mathcal{U}[0,100]$ on the additional parameter
$\theta_{y}$. Note that the data itself does not suggest that this
model is particularly suitable: it is chosen simply to highlight the
computational problems that can result in the presence of a latent
ERGM. We use the same ABC-SMC algorithm as above, using the same summary
statistics (which are now not sufficient). The ABC-SMC again terminated
at $n=16$, and weighted points simulated from $p(\theta_{1},\theta_{2}|y)$
using this method are shown in figure \ref{fig:Weighted-points-from-2}.
One of the limitations of ABC is evident here: since the statistics
are not sufficient, it is difficult to assess how accurate the approximation
to the true posterior is, even though we have $\epsilon=0$. We applied the DA algorithm and PMCMC algorithms to the same model
that assumes the data is noisy (including the parameter $\theta_{y}$).
In the DA algorithm we chose $s=10$ and $L=10^{4}$ and points generated
from $p(\theta_{1},\theta_{2}|y)$ are shown in figure \ref{fig:Points-simulated-from-2}.
The posterior sample produced by this sampler was highly dependant
on the initial point. In this particular example, when the initial
value of $x$ is the complete graph (where all edges are present),
the chain gets stuck in this state, which is highly correlated \emph{a
posteriori} with a small value of $\theta_{y}$. Again, the inefficiency
of the updates on $x$ and the dependency between $x$ and $\theta$
is seen to lead to the poor performance of the DA approach.

In addition to comparing the PMCMC approach to DA, we also compare
the original IS based pseudo-marginal approach to
the more general PMCMC approach in order to observe the benefit of
using an SMC sampler in sampling the $x$ space. Specifically, we
compare IS with $10^{4}$ importance points drawn
using a uniform distribution independently on each edge, with the
SMC sampler with tempering using 1000 targets and 10 particles. Figures
\ref{fig:Points-simulated-from-3} and \ref{fig:Points-simulated-from-4}
respectively illustrate the PMCMC results using these two schemes.
The performance of the IS based technique is poor
since the importance proposal is poor and does not provide an accurate
estimate of the normalising constant $Z(\theta)p(y|\theta)$. The
performance of the tempering SMC based EMPMCMC is significantly better
than both the IS and DA approaches, with convergence
to the region shown in figure \ref{fig:Points-simulated-from-4} regardless
how the MCMC is initialised. This lack of dependence on initial conditions,
and the free exploration of the $x$ and $\theta_{y}$ space that
is allowed by the method, provide some evidence that this method has
found the true posterior.

In the DA and PMCMC extended exchange algorithms, we use 1000 intermediate
targets in the extended exchange algorithm: again we find that fewer
intermediate targets results in poor estimation of the ratio of
normalising constants and thus inefficient MCMC algorithms.

Again, the dominating factor in the computational cost of each algorithm is in the Gibbs sampler for the simulation from the $x$ space. In this case, for each iteration of the DA or PMCMC algorithm $10^4$ sweeps of the Gibbs sampler are performed at each iteration of the MCMC (along with an additional $10^3$ for the exchange algorithm), so the total cost of our run is approximately $5\times10^7$ sweeps.  For each target of the ABC-SMC $10^7$ sweeps are performed, so the total computational cost is $1.6\times 10^8$ sweeps.  The method of \citet{Caimo2011} is relatively cheap compared to ABC-SMC, with the only simulation from the $x$ space being carried out as part of the exchange algorithm (which costs $10^3$ sweeps).

\begin{figure}[h]
\caption{Results for Bayesian parameter estimation in a hidden ERGM.\label{fig:Results-for-Bayesian-1}}

\subfloat[Observed data.\label{fig:Observed-data.-1}]{

\includegraphics[scale=0.26]{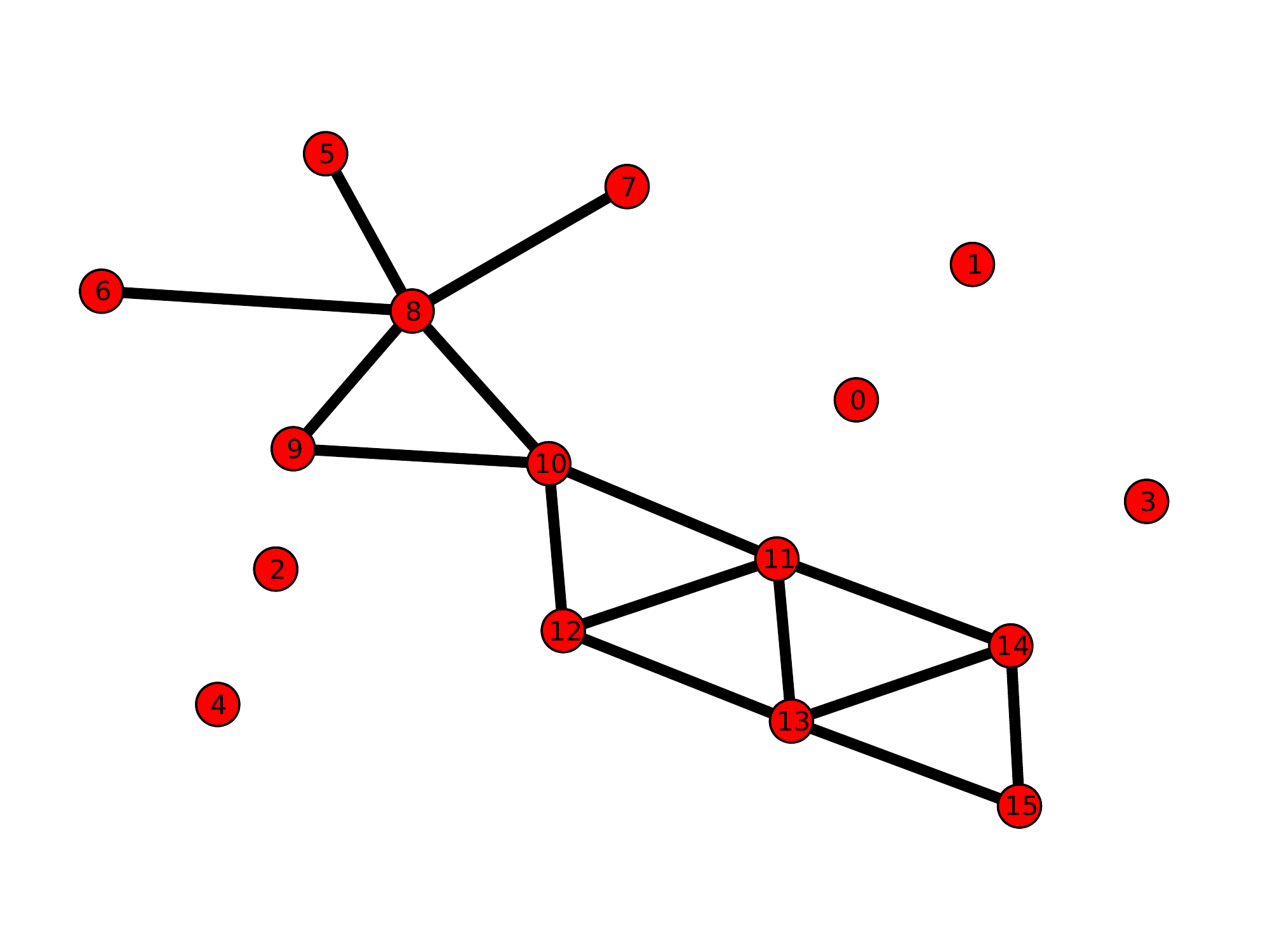}}\subfloat[Points from the posterior (with no observation error) produced
by the ABC-SMC algorithm.\label{fig:Weighted-points-from-1}]{

\includegraphics[scale=0.35]{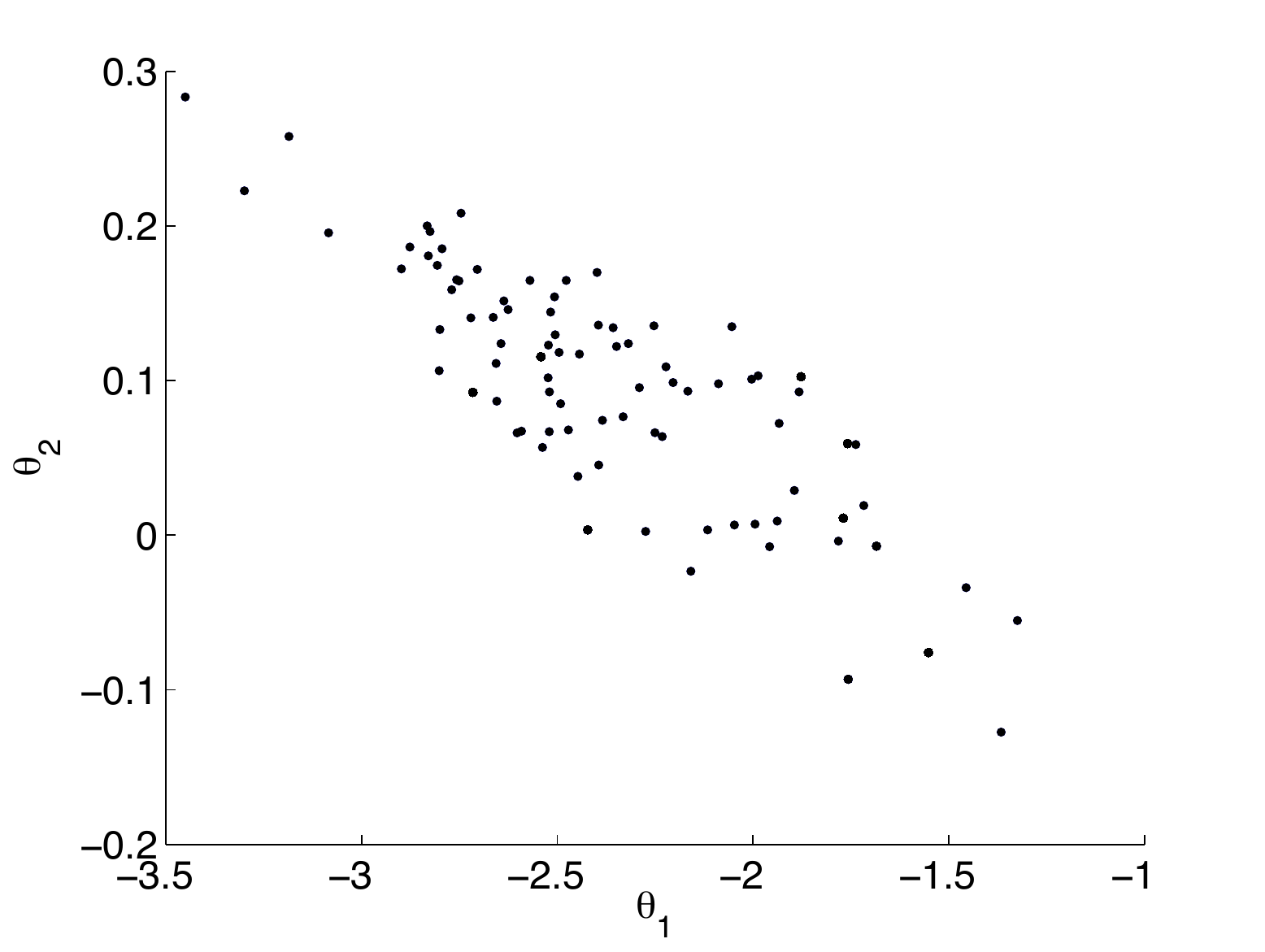}}\subfloat[Points from the posterior produced by the ABC-SMC algorithm.\label{fig:Weighted-points-from-2}]{

\includegraphics[scale=0.35]{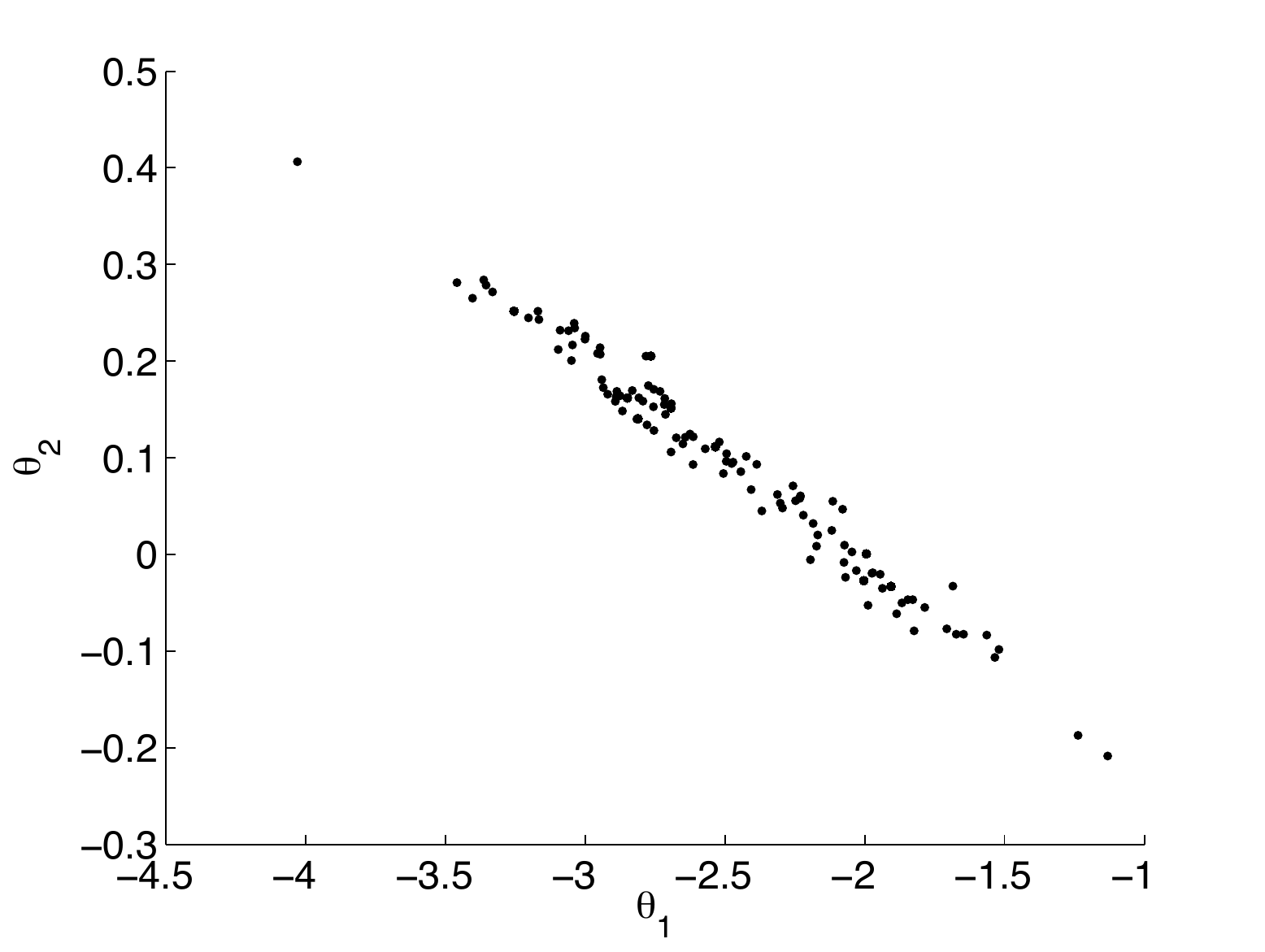}}

\subfloat[Points simulated from the posterior using the DA exchange algorithm.\label{fig:Points-simulated-from-2}]{

\includegraphics[scale=0.35]{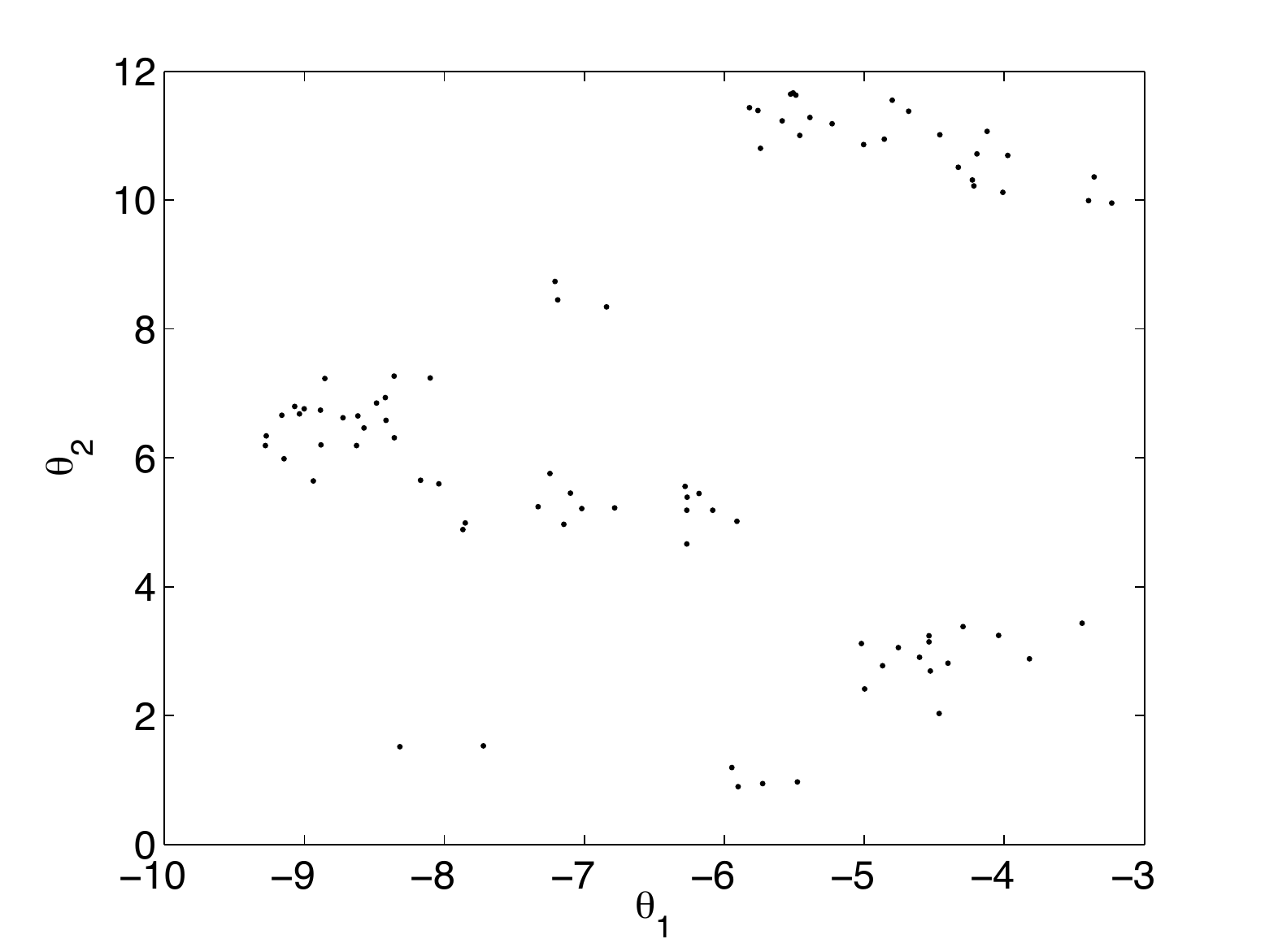}}\subfloat[Importance sampling based pseudo-marginal approach, with the exchange algorithm.\label{fig:Points-simulated-from-3}]{

\includegraphics[scale=0.35]{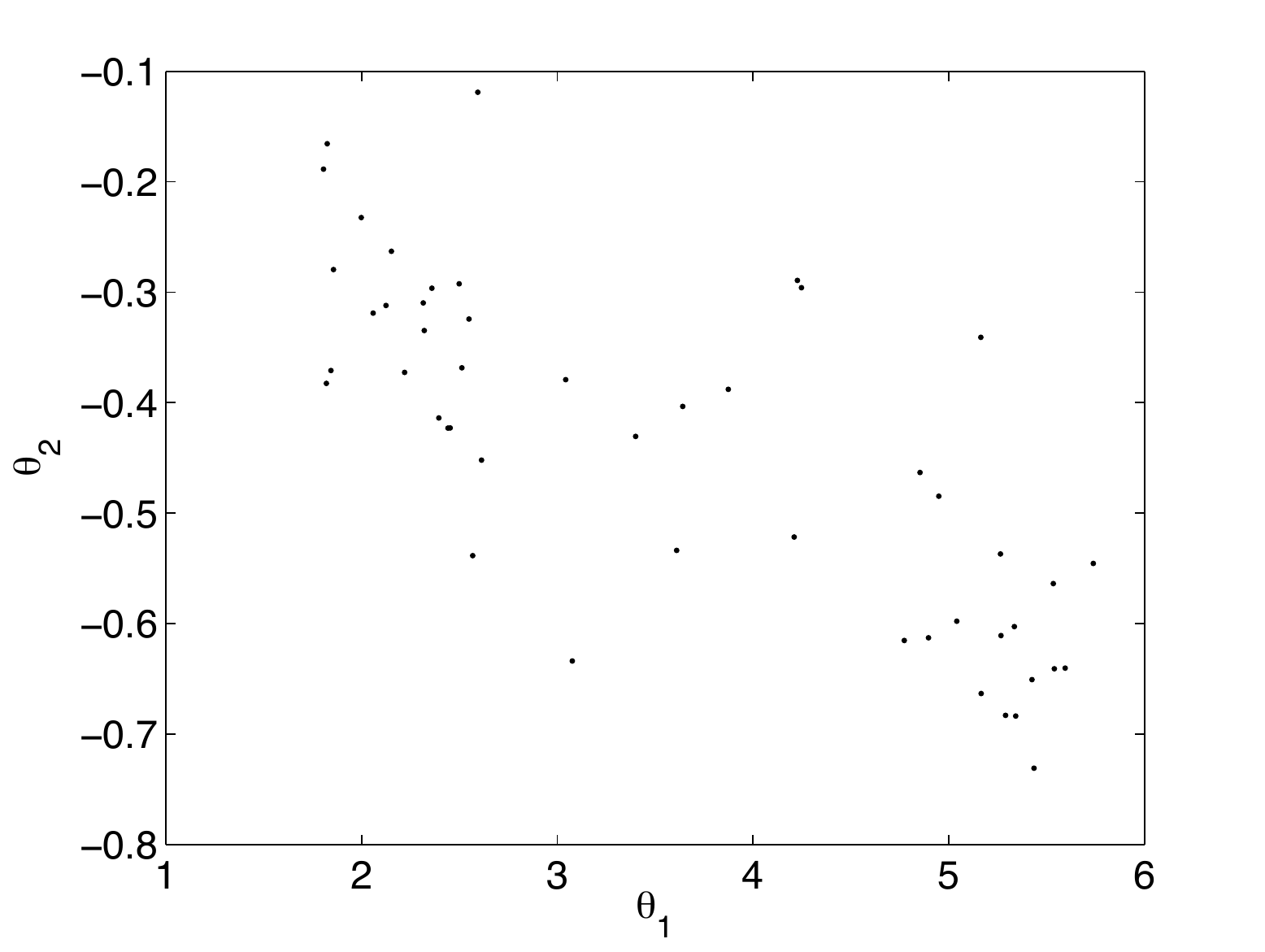}}\subfloat[Points simulated from the posterior using the exchange marginal PMCMC
algorithm containing an SMC sampler.\label{fig:Points-simulated-from-4}]{

\includegraphics[scale=0.35]{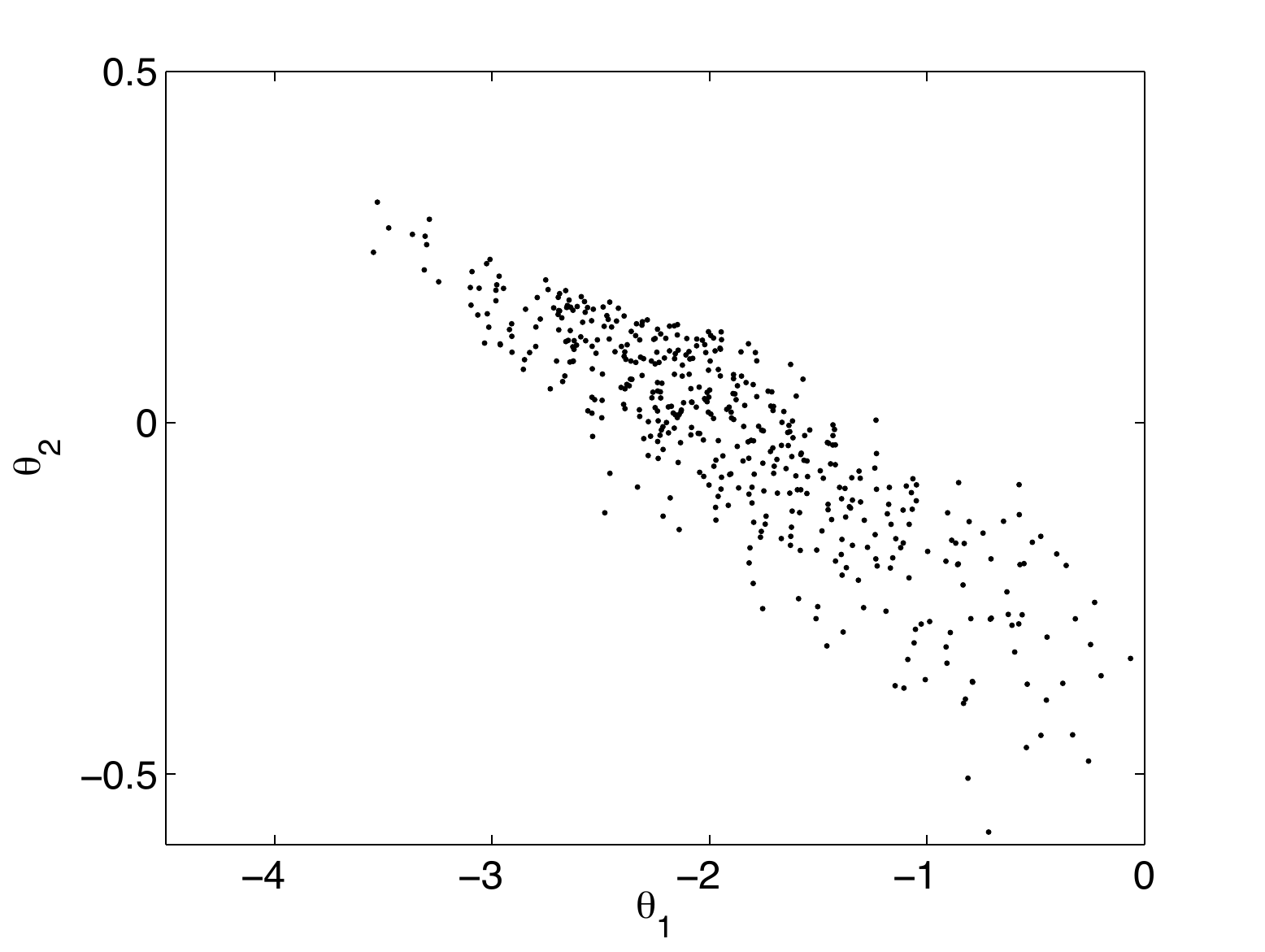}}
\end{figure}

\section{DISCUSSION\label{sec:Conclusions}}

Bayesian parameter estimation for latent MRFs can face computational
difficulties using standard methodology since the DA approach can
be extremely inefficient. We have described two methods, ABC and PMCMC
that can offer an alternative in situations where DA is not suitable.

ABC is currently particularly popular for addressing missing data
problems, and we apply it for the first time to inference in ERGMs
as a method for avoiding the intractable normalising
constant. We also provide a theoretical justification for the use
of MCMC for inexact simulation from the likelihood, as is required in many MRF models, within ABC-MCMC. However, we observe the
usual limitations of ABC in cases where sufficient statistics are
not available: namely that an approximation that is difficult to quantify
is introduced.

Marginal PMCMC offers an effective means of bypassing the potential
inefficiency of DA and our results indicate that this method is a promising approach to parameter estimation in these models.
We note the large computational cost of these methods, especially
when sampling from a high dimensional latent space. As such, routine
use of the method will usually only be possible when accompanied by
an efficient implementation, possibly on parallel computing architectures
such as graphics cards or cloud computing resources. Given the increase
in popularity of this hardware, the PMCMC methodology offers a promising
avenue for use on more realistic applications in the near future. The key aspect of PMCMC is the use of an SMC sampler for sampling
from the $x$ space. Although we have focussed on parameter estimation,
our results (and those of \citet{Hamze2005}) indicate that SMC samplers
have an important role to play in simulating from MRFs, both when
the parameters are known and unknown. These methods have not been
used before in the ERGM literature, and address some of the known
problems with MCMC described in \citet{Snijders2002}.

We follow previous work in using the exchange algorithm to account
for the intractable normalising constant, and give theoretical justification
for the use of approximate exchange algorithms where MCMC as a substitute
for exact simulation from the likelihood. In application, we find
the use of the extended version of the exchange algorithm described in \citet{Murray2006}
to be essential to ensure the MCMC can move freely in the problems
we consider.

\section*{Acknowledgments}

This work was funded by the EPSRC SuSTaIN program at the Department
of Mathematics, University of Bristol. The author thanks Christophe
Andrieu and Mark Briers for useful discussions and to the three anonymous reviewers whose comments helped to improve the paper.

\begin{center}
{\large\bf SUPPLEMENTAL MATERIALS}
\end{center}

\begin{description}

\item[Appendices] containing:  a derivation of the target distribution of exchange marginal PMCMC; and a derivation of bounds for the distance between the true posterior approximate posteriors targeted by the algorithms used in the paper with proof of an ergodicity result for the MCMC algorithms that target the approximate posteriors. (pdf)

\end{description}

\appendix

\section*{Appendix A: Target Distribution of Exchange Marginal PMCMC\label{sec:Target-Distribution-of}}

This appendix establishes that the EMPMCMC algorithm in section 2.2.3
of the paper has the desired target density of $p(\theta|y)$. The
proof requires the description of the extended target and proposal
distribution used as a consequence of the use of the SMC sampler within
the algorithm.

For ease of exposition, we express the algorithm in a slightly different
form to that in the main text, including the prior $p(\theta)$ within
the SMC sampler. The $i$'th iteration of the algorithm is then:
\begin{enumerate}
\item Draw $\theta^{*}\sim q(.|\theta(i-1))$ and $u\sim f(.|\theta^{*})$.
\item Run an SMC sampler on the $x$ space, with the final (unnormalised)
distribution as $\pi_{T}(x)=p(\theta)\gamma(x,y|\theta^{*})$ in order
to obtain the particle approximation $\widehat{p}(x|\theta^{*},y)$
to the distribution $\widehat{p}(x|\theta^{*},y)=p(\theta^{*})\gamma(x,y|\theta^{*})/Z(\theta^{*})p(\theta^{*},y)$
and an estimate $\widehat{\phi}(\theta^{*},y)$ of its normalising
constant, $\phi(\theta^{*},y):=Z(\theta^{*})p(\theta^{*},y)$.
\item Sample a single point $x^{*}$ from $\widehat{p}(.|\theta^{*},y)$.
\item Let $(\theta(i),\theta^{*}(i),x(i),\widehat{\phi}(i),u(i))=(\theta^{*},\theta(i-1),x^{*},\widehat{\phi}(\theta^{*},y),u)$
with probability:
\[
1\wedge \frac{\widehat{\phi}(\theta^{*},y)q(\theta|\theta^{*})\gamma(u|\theta)}{\phi(i-1)q(\theta^{*}|\theta)\gamma(u|\theta^{*})},
\]
otherwise set $(\theta(i),\theta^{*}(i),x(i),\widehat{\phi}(i),u(i))=(\theta(i-1),\theta^{*}(i-1),x(i-1),\widehat{\phi}(i-1),u(i-1))$.
\end{enumerate}
\selectlanguage{english}%
In advance of the proof we also make some definitions relating to
the use of the SMC sampler (using $\theta$ rather than $\theta^{*}$
throughout to simplify the notation).\foreignlanguage{british}{ We
define $\pi_{k}^{\theta}(x_{k})=\gamma_{k}^{\theta}(x_{k})/\phi_{k}(\theta)$
(for $k=1,...,T$) to be the sequence of targets used in the SMC sampler.
In our case, we take $\gamma_{T}^{\theta}(x_{T})=p(\theta)\gamma(x_{T},y|\theta)$
and $\phi_{T}(\theta)=\phi(\theta,y)$, so that $\pi_{T}^{\theta}(x_{k})=p(x|\theta,y)$.
The underlying construction of the SMC sampler is such that it targets
the artificially constructed sequence of distributions $\widetilde{\pi}_{k}^{\theta}(x_{1:k})=\widetilde{\gamma}_{k}^{\theta}(x_{1:k})/\phi_{k}(\theta)$
(for $k=1,...,T$), with
\[
\widetilde{\gamma}_{k}^{\theta}(x_{1:k})=\gamma_{k}^{\theta}(x_{k})\prod_{j=1}^{k-1}L_{j}^{\theta}(x_{j+1},x_{j}).
\]
The SMC sampler generates a weighted importance sample from each extended
target $\widetilde{\pi}_{k}^{\theta}(x_{1:k})$ in succession. We
denote the state of the $p$'th particle at the $k$'th target by
$x_{1:k}^{p}$ (the joint state of all particles at the $k$th target
is denoted by $x_{1:k}$). At the $k$'th target, the particles are:
resampled; moved using a transition kernel; then weighted.}

\selectlanguage{british}%
To describe the resampling step, we introduce the distribution $r(a_{k-1}|w_{k-1})$,
with $w_{k-1}$ denoting the weights of the particles at target $k-1$
and $a_{k-1}=(a_{k-1}^{1},...,a_{k-1}^{P})$ with $a_{k-1}^{p}$ giving
the index of the {}``parent'' particle of {}``child'' particle
$x_{1:k}^{p}$. This operation can be interpreted as the process by
which child particles at target $k$ choose their parent particles
from the population at target $k-1$. For the proof we also need to
define, for $k=1,...,T$ and $p=1,...,P$, the index $b_{k}^{p}$
which the ancestor particle of $x_{1:T}^{p}$ at target $k$ had at
that time.

Let $M_{1}(x_{1})$ be the initial proposal density and $M_{k}(x_{k-1},x_{k})$
for $k=2,...,T$ be the transition kernels used at each target. The
weighting step, applied to the $p$'th particle at the $k$'th target
after the resampling and move steps, finds the unnormalised weight
$\widetilde{w}_{k}^{p}$ of the particle at the current target:
\begin{equation}
\widetilde{w}_{k}^{p}=\frac{\gamma_{k}^{\theta}(x_{k})L_{k-1}^{\theta}(x_{k},x_{k-1}^{a_{k-1}^{p}})}{\gamma_{k-1}^{\theta}(x_{k-1}^{a_{k-1}^{p}})M_{k}(x_{k-1}^{a_{k-1}^{p}},x_{k})}.\label{eq:weights}
\end{equation}
The weight is then normalised:
\[
w_{k}^{p}=\frac{\widetilde{w}_{k}^{p}}{\sum_{p=1}^{P}\widetilde{w}_{k}^{p}}.
\]

An approximation to the target $p(x|\theta,y)$ is then given by
\[
\widehat{p}(dx|\theta,y)=\sum_{p=1}^{P}w_{T}^{p}\delta_{x_{T}^{p}}(dx)
\]
where $\delta$ is the Dirac delta function, with an estimate of its
normalising constant given by
\[
\widehat{\phi}(\theta,y)=\prod_{k=1}^{T}\left[\frac{1}{P}\sum_{p=1}^{P}\widetilde{w}_{T}^{p}\right].
\]

In advance of the theorem, we introduce the joint density of the variables
generated by the SMC algorithm that uses $P$ particles and $T$ targets,
defined on the space $\mathcal{X}^{TP}\times\left\{ 1,...,P\right\} ^{(T-1)P}$:
\begin{eqnarray}
\psi^{\theta}(x_{1},...,x_{T},a_{1},...,a_{T-1}) & := & \psi^{\theta}(x_{1})\prod_{k=2}^{T}\psi^{\theta}(a_{k-1}|x_{k-1})\psi^{\theta}(x_{k}|x_{k-1},a_{k-1})\nonumber \\
 & = & \left(\prod_{p=1}^{P}M_{1}^{\theta}\left(x_{1}^{p}\right)\right)\prod_{k=2}^{T}\left(r(a_{k-1}|w_{k-1})\prod_{p=1}^{P}M_{k}^{\theta}\left(x_{k-1}^{a_{k-1}^{p}},x_{k}^{p}\right)\right) \nonumber.
\end{eqnarray}

\begin{thm}
Let $O_{k}^{p}:=\sum_{j=1}^{P}\mathbb{I}(a_{k}^{j}=p)$ be the number
of offspring of particle $p$ at target $k$. If for any $p=1,...,P$
and $k=1,...,T$ the resampling scheme satisfies
\[
\mathbb{E}[O_{k}^{p}]=Pw_{k}^{p},
\]
(i.e., it is unbiased) and also that
\begin{equation}
r(a_{k}^{p}=m|w_{k})=w_{k}^{m},\label{eq:r_property}
\end{equation}
then the EMPMCMC algorithm is an MCMC algorithm
targeting a joint distribution that admits $p(\theta,x|y)$ as a marginal.\end{thm}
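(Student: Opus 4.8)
The plan is to follow the extended-target strategy of \citet{Andrieu2010}, augmenting their particle marginal Metropolis--Hastings construction with the exchange auxiliary variable $u$, and then to invoke the deterministic-transformation result of \citet{Tierney1998} for the final swap. First I would write down an extended target $\widetilde\pi$ living on the space of all variables produced in one iteration: the pair $(\theta,\theta^*)$, the exchange variable $u$, the complete collection of SMC particles $x_{1:T}=\{x_k^p\}$, the ancestor indices $a_{1:T-1}$, and the index $k$ of the finally selected particle. Modelled on \citet{Andrieu2010}, this target takes the form of the true posterior for the selected lineage, $p(\theta,x_T^{b_T}\mid y)$, multiplied by the SMC proposal density $\psi^{\theta}$ with the selected lineage divided out, times the proposal $q(\theta^*\mid\theta)$ and the exchange factor $\gamma(u\mid\theta^*)/Z(\theta^*)$. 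By construction, summing $\widetilde\pi$ over the auxiliary indices and integrating out every particle except the selected one returns $p(\theta,x\mid y)$ as a marginal; this marginalisation is precisely where the two hypotheses on the resampling scheme are used.

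Next I would verify the key PMCMC identity. The two assumptions---unbiasedness $\mathbb{E}[O_k^p]=Pw_k^p$ and the selection property (\ref{eq:r_property})---are exactly what is needed to show that the normalising-constant estimator $\widehat\phi(\theta^*,y)$ is unbiased for $\phi(\theta^*,y)$ and, more importantly at the level of densities, that the product of $\widehat\phi(\theta^*,y)$ with the normalised weight $w_T^k$ of the selected particle equals the ratio of the true extended target to the SMC proposal density $\psi^{\theta^*}$. Establishing this identity, using the weight recursion (\ref{eq:weights}), is the technical heart of the argument and the step I expect to be the main obstacle: it requires tracking how the unnormalised weights telescope across targets and how the ancestor indices distribute probability mass so that the selected-lineage marginal comes out correctly.

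With the extended target and this identity in hand, I would recognise the algorithm as a Metropolis--Hastings move on $\widetilde\pi$. Steps 1--3 of the algorithm constitute the proposal: drawing $\theta^*$, running the SMC sampler, sampling $u\sim f(.\mid\theta^*)$, and selecting the index $k$ with probability $w_T^k$. Step~4 is the deterministic involution $T$ that swaps $\theta\leftrightarrow\theta^*$ while retaining $u$ and the selected point $x^*$; since $T$ is its own inverse with unit Jacobian, \citet{Tierney1998} gives the acceptance probability $1\wedge\widetilde\pi(T(\cdot))/\widetilde\pi(\cdot)$.

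Finally I would compute this ratio explicitly. Substituting the PMCMC identity, the ratio of SMC proposal densities to true targets collapses to the ratio $\widehat\phi(\theta^*,y)/\widehat\phi(i-1)$ of normalising-constant estimates, while the exchange construction makes the intractable factors cancel: the $Z(\theta^*)$ from the $u$-density and the $Z(\theta)$, $Z(\theta^*)$ carried by the lineage targets annihilate one another, leaving only $\gamma(u\mid\theta)/\gamma(u\mid\theta^*)$ together with the prior and proposal ratios. The surviving expression coincides with the stated acceptance probability, so $\widetilde\pi$ is invariant, and by the marginalisation established in the first step, $p(\theta,x\mid y)$ is a marginal of the invariant distribution, as claimed.
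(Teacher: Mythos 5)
Your overall strategy matches the paper's --- an extended target in the spirit of \citet{Andrieu2010} augmented with the exchange variable $u$, the weight-telescoping identity, and a swap move justified by \citet{Tierney1998} --- and your final cancellation of the intractable constants (the $Z(\theta^{*})$ from the $u$-density against those carried by the lineage targets) is exactly what happens in the paper's computation. However, there is a genuine structural gap in your construction of the extended target: you include only \emph{one} SMC particle system, tied to the current $\theta$. With that target the two halves of your argument do not fit together. The algorithm regenerates the particle system under the \emph{proposed} $\theta^{*}$ at every iteration, and this regeneration is not a draw from any conditional of your $\widetilde{\pi}$ (the conditional law of the particle block given $\theta$ is the reweighted PMCMC target, not the SMC proposal density under $\theta^{*}$), so steps 1--3 do not leave $\widetilde{\pi}$ invariant on their own. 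Step 4 is then not a purely deterministic map either --- the old particle system is discarded and replaced, not swapped --- so the Tierney formula $1\wedge\widetilde{\pi}(T(\cdot))/\widetilde{\pi}(\cdot)$, which contains no proposal densities, cannot produce the ``ratio of SMC proposal densities to true targets'' that your final paragraph relies on. Conversely, if you instead treat steps 1--4 as a single Metropolis--Hastings move with proposal densities, the factors $q(\theta^{*}|\theta)$ and $\gamma(u|\theta^{*})/Z(\theta^{*})$ appear both in $\widetilde{\pi}$ and in the proposal and are double-counted, and the ratio does not reduce to the stated acceptance probability.

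The fix is precisely the paper's construction: the extended target must contain \emph{two} complete particle blocks, $\check{\pi}(E,E^{*},u):=\tilde{\pi}^{N}(E)\,q^{N}(\theta,E^{*})\,\gamma(u|\theta^{*})/Z(\theta^{*})$, where $E$ carries the current particles and indices distributed according to the PMCMC extended target $\tilde{\pi}^{N}$, and $E^{*}$ carries the proposed particles weighted by exactly the density $q^{N}(\theta,E^{*})$ by which the algorithm generates them (the SMC density $\psi^{\theta^{*}}$ times the selection weight $w_{T}^{*v^{*}}$). Then steps 1--3 are a genuine Gibbs refresh of the block $(E^{*},u)$ from its full conditional under $\check{\pi}$, step 4 is the genuinely deterministic involution $T(E,E^{*},u)=(E^{*},E,u)$ to which the result of \citet{Tierney1998} applies, and the acceptance ratio $\check{\pi}(T(E,E^{*},u))/\check{\pi}(E,E^{*},u)$ contains the two factors $\tilde{\pi}^{N}(E^{*})/q^{N}(\theta,E^{*})$ and $q^{N}(\theta^{*},E)/\tilde{\pi}^{N}(E)$, each of which collapses via the telescoping identity you correctly identify (using the weight definition and the resampling property (\ref{eq:r_property})) to give $\widehat{\phi}(\theta^{*},y)/\widehat{\phi}(\theta,y)$ times the proposal ratio, and hence the stated acceptance probability. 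Your remaining ingredients --- the marginalisation argument and the cancellation of the $Z$ terms --- then go through essentially unchanged.
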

\begin{proof}
The structure of the proof is as follows. We first fully describe
the target and proposal densities used in the marginal PMCMC algorithm,
then combine these with the density of the $u$ variable generated
for the exchange algorithm, and show that the EMPMCMC algorithm performs
a deterministic {}``swap'' move on this extended target.

To begin, on the space $\mathcal{E}:=\Theta\times\mathcal{X}^{TP}\times\left\{ 1,...,P\right\} ^{(T-1)P+1}$,
we define the proposal and target for a marginal PMCMC algorithm.
To simply the notation, let $E=(\theta,v,x_{1},...,x_{T},a_{1},...,a_{T-1})$
and $E^{*}=(\theta^{*},v^{*},x_{1}^{*},...,x_{T}^{*},a_{1}^{*},...,a_{T-1}^{*})$.
The proposal is then
\begin{eqnarray*}
q^{N}(\theta,E^{*}) & := & q(\theta^{*}|\theta)w_{T}^{*v^{*}}\psi^{\theta^{*}}(x_{1}^{*},...,x_{T}^{*},a_{1}^{*},...,a_{T-1}^{*})
\end{eqnarray*}
where the weight $w_{T}^{*v^{*}}$ is present due to the sampling
of the index $v^{*}$ to generate $x^{*}$ from the particle approximation
$\widehat{p}(dx|\theta,y)$. The target density is given by

\begin{eqnarray*}
\tilde{\pi}^{N}(E) & := & \frac{p(\theta,x_{T}^{v}|y)\prod_{k=2}^{T}L_{k-1}^{\theta}(x_{k}^{b_{T}^{v}},x_{k-1}^{b_{T}^{v}})}{P^{T}}\frac{\psi^{\theta}(x_{1},...,x_{T},a_{1},...,a_{T-1})}{M_{1}^{\theta}\left(x_{1}^{b_{1}^{v}}\right)\prod_{k=2}^{T}r(b_{k-1}^{v}|w_{k-1})M_{k}^{\theta}\left(x_{k-1}^{b_{k-1}^{v}},x_{k}^{b_{k}^{v}}\right)},
\end{eqnarray*}
which has the desired $p(\theta,x|y)$ as a marginal.

Now to derive the acceptance probability of the EMPMCMC algorithm,
we express the algorithm in terms of a deterministic swap move on
the following extended target:
\[
\check{\pi}(E,E^{*},u):=\tilde{\pi}^{N}(E)q^{N}(\theta,E^{*})\gamma(u|\theta^{*})/Z(\theta^{*}).
\]
Specifically, at each iteration of the algorithm, we apply the transformation
$T:\mathcal{E}\times\mathcal{E}\times\mathcal{X}\rightarrow\mathcal{E}\times\mathcal{E}\times\mathcal{X}$
defined by
\[
T(E,E^{*},u)=(E^{*},E,u).
\]
The acceptance probability of this move is given by
\begin{eqnarray}
1\wedge\frac{\check{\pi}(T(E,E^{*},u))}{\check{\pi}(E^{*},E,u)} & = & 1\wedge\frac{\tilde{\pi}^{N}(E^{*})}{q^{N}(\theta,E^{*})}\frac{q^{N}(\theta^{*},E)}{\tilde{\pi}^{N}(E)}\frac{\gamma(u|\theta)Z(\theta^{*})}{\gamma(u|\theta^{*})Z(\theta)}.\label{eq:accept}
\end{eqnarray}
The term $\tilde{\pi}^{N}(E)/q^{N}(\theta^{*},E)$ simplifies as follows:
\begin{eqnarray}
\frac{\tilde{\pi}^{N}(E)}{q^{N}(\theta^{*},E)} & = & \frac{1}{P^{T}}\frac{p(\theta,x_{T}^{v}|y)\prod_{k=2}^{T}L_{k-1}^{\theta}(x_{k}^{b_{T}^{v}},x_{k-1}^{b_{T}^{v}})}{q(\theta|\theta^{*})w_{T}^{v}M_{1}^{\theta}\left(x_{1}^{b_{1}^{v}}\right)\prod_{k=2}^{T}r(b_{k-1}^{v}|w_{k-1})M_{k}^{\theta}\left(x_{k-1}^{b_{k-1}^{v}},x_{k}^{b_{k}^{v}}\right)}\nonumber \\
 & = & \frac{p(\theta,x_{T}^{v}|y)\prod_{k=2}^{T}L_{k-1}^{\theta}(x_{k}^{b_{T}^{v}},x_{k-1}^{b_{T}^{v}})}{q(\theta|\theta^{*})M\left(x_{1}^{b_{1}^{v}}\right)\prod_{k=2}^{T}M_{k}^{\theta}\left(x_{k-1}^{b_{k-1}^{v}},x_{k}^{b_{k}^{v}}\right)}\frac{\frac{1}{P^{T}}}{\prod_{k=1}^{T}w_{k}^{b_{k}^{v}}}\nonumber \\
 & = & \frac{p(\theta,x_{T}^{v}|y)\prod_{k=2}^{T}L_{k-1}^{\theta}(x_{k}^{b_{T}^{v}},x_{k-1}^{b_{T}^{v}})}{\phi(\theta,y)q(\theta|\theta^{*})M\left(x_{1}^{b_{1}^{v}}\right)\prod_{k=2}^{T}M_{k}^{\theta}\left(x_{k-1}^{b_{k-1}^{v}},x_{k}^{b_{k}^{v}}\right)}\frac{\frac{1}{P^{T}}\prod_{k=1}^{T}\sum_{p=1}^{P}\tilde{w}_{k}^{p}}{\prod_{k=1}^{T}\widetilde{w}_{k}^{b_{k}^{v}}}\nonumber \\
 & = & \frac{p(\theta)\gamma(x_{T}^{v},y|\theta)\prod_{k=2}^{T}L_{k-1}^{\theta}(x_{k}^{b_{T}^{v}},x_{k-1}^{b_{T}^{v}})}{q(\theta|\theta^{*})M\left(x_{1}^{b_{1}^{v}}\right)\frac{\gamma_{1}(x_{1}^{b_{1}^{v}})}{M_{1}(x_{1}^{b_{1}^{v}})}\prod_{k=2}^{T}M_{k}^{\theta}\left(x_{k-1}^{b_{k-1}^{v}},x_{k}^{b_{k}^{v}}\right)\frac{\gamma_{k}^{\theta}(x_{k}^{b_{k}^{v}})L_{k-1}^{\theta}(x_{k}^{b_{k}^{v}},x_{k-1}^{b_{k-1}^{v}})}{\gamma_{k-1}^{\theta}(x_{k-1}^{b_{k-1}^{v}})M_{k}(x_{k-1}^{b_{k-1}^{v}},x_{k}^{b_{k}^{v}})}}\frac{\widehat{\phi}(\theta,y)}{p(y)Z(\theta)}\nonumber \\
 & = & \frac{\widehat{\phi}(\theta,y)}{q(\theta|\theta^{*})p(y)Z(\theta)},\label{eq:ratio}
\end{eqnarray}
using equations (\ref{eq:weights}) and (\ref{eq:r_property}).

Now, substituting equation (\ref{eq:ratio}) into equation (\ref{eq:accept}),
we obtain
\begin{eqnarray*}
1\wedge\frac{\check{\pi}(T(E,E^{*},u))}{\check{\pi}(E^{*},E,u)} & = & 1\wedge\frac{\widehat{\phi}(\theta^{*},y)}{q(\theta^{*}|\theta)p(y)Z(\theta^{*})}\frac{q(\theta|\theta^{*})p(y)Z(\theta)}{\widehat{\phi}(\theta,y)}\frac{\gamma(u|\theta)Z(\theta^{*})}{\gamma(u|\theta^{*})Z(\theta)}\\
 & = & 1\wedge\frac{\widehat{\phi}(\theta^{*},y)}{\widehat{\phi}(\theta,y)}\frac{q(\theta|\theta^{*})}{q(\theta^{*}|\theta)}\frac{\gamma(u|\theta)}{\gamma(u|\theta^{*})},
\end{eqnarray*}
as required.
\end{proof}

\section*{Appendix B: Convergence of Approximate Algorithms\label{sec:Convergence-of-approximate}}

In the appendix we prove the convergence of MCMC algorithms that take
the following form. We note that the assumptions used for the proof
are relatively strong, and are not widely applicable. However, it
likely that similar (weaker) results exist under weaker assumptions:
the results in this paper are intended as the first steps towards
future work that would obtain results that hold more generally.

Suppose that we have an {}``exact'' MCMC algorithm, using transition
kernel
\begin{equation} \nonumber
K((\theta,u),(\theta^{*},u^{*}))=q(\theta^{*}|\theta)\pi_{\theta^{*}}(u^{*})\alpha((\theta,u),(\theta^{*},u^{*}))+\delta_{\theta,u}(\theta^{*},u^{*})r(\theta,u),
\end{equation}
where $\alpha((\theta,u),(\theta^{*},u^{*}))$ is such that $K$ is
an MCMC kernel that has an invariant distribution of $\pi(\theta,u)=\pi(\theta)\pi_{\theta}(u)$
and
\begin{equation} \nonumber
r(\theta,u)=1-\int_{\theta^{*},u^{*}}q(\theta^{*}|\theta)\pi_{\theta^{*}}(u^{*})\alpha((\theta,u),(\theta^{*},u^{*}))d\theta^{*}du^{*}.
\end{equation}
The SAV method takes this precise form. The theorem below characterises
the invariant distribution (where it exists), and the convergence
rate, of the {}``approximate'' MCMC algorithm given by the kernel
\begin{equation} \nonumber
\widetilde{K}_{M}((\theta,u),(\theta^{*},u^{*}))=q(\theta^{*}|\theta)L_{\theta^{*}}^{M}(v_{0},u^{*})\alpha((\theta,u),(\theta^{*},u^{*}))+\delta_{\theta,u}(\theta^{*},u^{*})\widetilde{r}(\theta,u),
\end{equation}
where $L_{\theta^{*}}^{M}(v_{0},u^{*})$ represents $M$ iterations
of an MCMC kernel with invariant distribution $\pi_{\theta^{*}}(u^{*})$,
beginning at an arbitrary fixed initial value $v_{0}\in\mathcal{X}$
and
\begin{equation} \nonumber
\widetilde{r}(\theta,u)=1-\int_{\theta^{*},u^{*}}q(\theta^{*}|\theta)L_{\theta^{*}}^{M}(v_{0},u^{*})\alpha((\theta,u),(\theta^{*},u^{*}))d\theta^{*}du^{*}.
\end{equation}

The same argument can be used the prove equivalent properties of the
approximate exchange and ABC-MCMC algorithms described in the main
text. In the exact versions of these algorithms the target distributions
and transition kernels have slightly different forms to that of the
SAV method:
\begin{itemize}
\item the exchange algorithm has the target distribution given in the main
text, and the proposal additionally contains the deterministic {}``swap''
move described in the main text;
\item the ABC-MCMC algorithm can be seen to target $ $$\pi(\theta)\pi_{\theta}(u)\pi_{\epsilon}(u|y)$
(changing the notation to be consistent with that used in the proof),
with the proposal taking the form $q(\theta^{*}|\theta)\pi_{\theta^{*}}(u^{*})$.
\end{itemize}
These differences also result in a different acceptance probability
to that used in the SAV algorithm, but have no impact on the structure
of the proof of the theorem.

Throughout the theorem and proof, $\left\Vert .\right\Vert $ represents
the total variation norm.
\begin{thm}
Suppose the Metropolis-Hastings kernel $K$ is uniformly ergodic,
i.e. there exists $C_{K}\in(0,\infty)$ and $\rho_{K}\in(0,1)$, such
that for any $(\theta_{0},u_{0})\in\Theta\times\mathcal{X}$ and $n\in\mathbb{N}$
\begin{equation} \nonumber
\left\Vert K^{n}((\theta_{0},u_{0}),.)-\pi(.)\right\Vert \leq C_{K}\rho_{K}^{n};
\end{equation}
$L_{\theta^{*}}$ is geometrically ergodic for all $\theta^{*}\in\Theta$,
i.e. for all $\theta^{*}\in\Theta$, there exists $C_{L}(v_{0},\theta^{*})\in(0,\infty)$
and $\rho_{L}\in(0,1)$, such that for $\pi_{\theta^{*}}$-a.e. $v_{0}\in\mathcal{X}$
and $n\in\mathbb{N}$
\begin{equation} \nonumber
\left\Vert L_{\theta^{*}}^{n}(v_{0},.)-\pi_{\theta^{*}}(.)\right\Vert \leq C_{L}(v_{0})\rho_{L}^{n};
\end{equation}
uniformly in $\theta^{*}$. Additionally, suppose that for some $D>0$,
$\sup_{\theta,\theta'}q(\theta'|\theta)\leq D$, where $q$ is the
proposal used in the kernel $K$, and for any $M\geq1$ there exists
a distribution $\widetilde{\pi}_{M}$ on $\Theta\times\mathcal{X}$
such that $\widetilde{\pi}_{M}\widetilde{K}_{M}=\widetilde{\pi}_{M}$.

Then for any $\epsilon\in(0,\rho_{K}^{-1}-1)$ there exists $M_{0}\in\mathbb{N}$,
$\widetilde{\rho}_{K}\in(\rho,\rho(1+\epsilon)]\subset(\rho,1)$ and
$\widetilde{C}\in(0,\infty)$ such that for all $M\geq M_{0}$, $(\theta_{0},u_{0})\in\Theta\times\mathcal{X}$
and $n\geq1$,
\begin{equation}
\left\Vert \widetilde{K}_{M}^{n}((\theta_{0},u_{0}),.)-\widetilde{\pi}_{M}(.)\right\Vert \leq\widetilde{C}\widetilde{\rho}_{K}^{n},\label{eq:approx_conv_rate}
\end{equation}
\begin{equation}
\left\Vert \pi(.)-\widetilde{\pi}_{M}(.)\right\Vert \leq C_{K}\frac{\epsilon}{1-\rho_{K}}.\label{eq:approx_target}
\end{equation}
\end{thm}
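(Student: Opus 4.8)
The plan is to treat $\widetilde{K}_{M}$ as a perturbation of the exact, uniformly ergodic kernel $K$ and to funnel everything through the single one-step quantity
\[
\delta_{M}:=\sup_{(\theta,u)}\left\Vert K((\theta,u),\cdot)-\widetilde{K}_{M}((\theta,u),\cdot)\right\Vert .
\]
First I would establish the bound $\delta_{M}\le C_{0}\rho_{L}^{M}$. The two kernels differ only in that the exact draw with density $\pi_{\theta^{*}}(u^{*})$ is replaced by $L_{\theta^{*}}^{M}(v_{0},u^{*})$ in the acceptance channel, with a compensating change $r\mapsto\widetilde{r}$ in the rejection mass. Treating the move part and the rejection atom $\delta_{\theta,u}$ separately, bounding $\alpha\le1$, and integrating, both the move contribution and the change $|r-\widetilde{r}|$ are controlled by $\int q(\theta^{*}|\theta)\,\Vert\pi_{\theta^{*}}-L_{\theta^{*}}^{M}(v_{0},\cdot)\Vert\,d\theta^{*}$; the ($\theta^{*}$-uniform) geometric ergodicity of $L_{\theta^{*}}$, together with $\sup q\le D$ to control the proposal's contribution, gives $\delta_{M}\le C_{0}\rho_{L}^{M}\to0$.

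Next I would obtain \eqref{eq:approx_target}. Writing the telescoping identity
\[
\widetilde{K}_{M}^{n}-K^{n}=\sum_{j=0}^{n-1}\widetilde{K}_{M}^{\,j}\,(\widetilde{K}_{M}-K)\,K^{\,n-1-j},
\]
observe that for any starting point the signed measure $\eta_{j}:=\nu_{j}(\widetilde{K}_{M}-K)$, with $\nu_{j}:=\delta_{(\theta,u)}\widetilde{K}_{M}^{\,j}$ a probability measure, has zero total mass, and $\Vert\eta_{j}\Vert\le\delta_{M}$. Uniform ergodicity contracts such zero-mass measures: $\Vert\eta K^{l}\Vert\le C_{K}\rho_{K}^{l}\Vert\eta\Vert$ whenever $\eta(\mathcal{X})=0$, via the Jordan decomposition and the triangle inequality through $\pi$. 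Summing the geometric series yields $\Vert\widetilde{K}_{M}^{n}((\theta_{0},u_{0}),\cdot)-K^{n}((\theta_{0},u_{0}),\cdot)\Vert\le C_{K}\delta_{M}/(1-\rho_{K})$, uniformly in $n$. Integrating the initial point against the assumed invariant law $\widetilde{\pi}_{M}$, using $\widetilde{\pi}_{M}\widetilde{K}_{M}^{n}=\widetilde{\pi}_{M}$ and $\Vert K^{n}-\pi\Vert\le C_{K}\rho_{K}^{n}\to0$, gives $\Vert\pi-\widetilde{\pi}_{M}\Vert\le C_{K}\delta_{M}/(1-\rho_{K})$; choosing $M_{0}$ so that $\delta_{M}\le\epsilon$ for all $M\ge M_{0}$ delivers \eqref{eq:approx_target} exactly.

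For the geometric rate \eqref{eq:approx_conv_rate} I would pass to the coupling coefficient $\bar{d}_{M}(n):=\sup_{x,y}\Vert\widetilde{K}_{M}^{n}(x,\cdot)-\widetilde{K}_{M}^{n}(y,\cdot)\Vert$, which is submultiplicative in $n$. Combining the telescoping bound with $\Vert K^{m}(x,\cdot)-K^{m}(y,\cdot)\Vert\le2C_{K}\rho_{K}^{m}$ gives $\bar{d}_{M}(m)\le2C_{K}\rho_{K}^{m}+2C_{K}\delta_{M}/(1-\rho_{K})$. Since $\epsilon<\rho_{K}^{-1}-1$, the target rate $\rho_{K}(1+\epsilon)$ lies in $(\rho_{K},1)$. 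I would first fix a block length $m$ so large that $(2C_{K})^{1/m}\rho_{K}\le\rho_{K}(1+\epsilon/2)$, whence $2C_{K}\rho_{K}^{m}\le[\rho_{K}(1+\epsilon/2)]^{m}$, and only then choose $M_{0}$ large enough that the residual $2C_{K}\delta_{M}/(1-\rho_{K})$ is absorbed into the gap, giving $\bar{d}_{M}(m)\le[\rho_{K}(1+\epsilon)]^{m}$ for $M\ge M_{0}$. Setting $\widetilde{\rho}_{K}:=\bar{d}_{M}(m)^{1/m}\in(\rho_{K},\rho_{K}(1+\epsilon)]$ and using submultiplicativity for $n=km+s$ ($0\le s<m$), together with $\sup_{x}\Vert\widetilde{K}_{M}^{n}(x,\cdot)-\widetilde{\pi}_{M}\Vert\le\bar{d}_{M}(n)$, yields \eqref{eq:approx_conv_rate} with $\widetilde{C}=\widetilde{\rho}_{K}^{-(m-1)}$.

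The main obstacle is this last step: extracting the sharp rate $\widetilde{\rho}_{K}\le\rho_{K}(1+\epsilon)$ rather than mere convergence to a neighbourhood of $\pi$. The naive telescoping bound carries an $n$-independent error term and so does not by itself give geometric decay to $\widetilde{\pi}_{M}$; it is the submultiplicative-coefficient argument, with its two-stage choice of first the block length $m$ (to dilute the constant $C_{K}$ to an arbitrarily small per-step factor) and only afterwards the approximation level $M$ (to absorb $\delta_{M}$), that converts one-step closeness into a genuine geometric rate arbitrarily close to $\rho_{K}$. A secondary technical point is the bookkeeping in the first step, ensuring the ergodicity constant for $L_{\theta^{*}}$ and the proposal bound $D$ combine into a clean, $\theta^{*}$-uniform geometric bound on $\delta_{M}$.
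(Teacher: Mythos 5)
Your proposal is correct and follows essentially the same route as the paper's proof: the same one-step perturbation bound $\sup_{\theta,u}\Vert K((\theta,u),\cdot)-\widetilde{K}_{M}((\theta,u),\cdot)\Vert\leq(D+1)C_{L}(v_{0})\rho_{L}^{M}$ obtained by splitting the move and rejection parts, the same telescoping decomposition with geometric contraction of zero-mass signed measures under $K$, and the same two-stage choice (first a block length $n$ to dilute $C_{K}$, then $M$ to absorb the $\rho_{L}^{M}$ residual) to obtain the rate $\rho_{K}(1+\epsilon)$. The only differences are presentational: you make explicit the submultiplicativity argument that the paper compresses into ``equation follows,'' and for the invariant-measure bound you integrate against $\widetilde{\pi}_{M}$ and use $\Vert K^{n}-\pi\Vert\rightarrow0$, where the paper integrates $\pi$ through both kernels and takes limits --- equivalent arguments.
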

\begin{proof}
To begin, for any $\theta,u\in\Theta\times\mathcal{X}$ and $n\in\mathbb{N}$,
\begin{eqnarray}
\left\Vert \widetilde{K}_{M}^{n}((\theta,u),.)-K^{n}((\theta,u),.)\right\Vert  & \leq & \sum_{i=0}^{n-1}\left\Vert \widetilde{K}_{M}^{n-i-1}((\theta,u),(K-\widetilde{K}_{M})(K^{i}-\pi)(.))\right\Vert \nonumber\\
 & \leq & C_{K}\sup_{\theta,u}\left\Vert K((\theta,u),.)-\widetilde{K}_{M}((\theta,u),.)\right\Vert \sum_{i=0}^{n-1}\rho_{K}^{i} \nonumber\\
 & \leq & \frac{C_{K}}{1-\rho_{K}}\sup_{\theta,u}\left\Vert K((\theta,u),.)-\widetilde{K}_{M}((\theta,u),.)\right\Vert \label{eq:for_target}
\end{eqnarray}
We now bound the final term on the right hand side. We have:
\begin{eqnarray}
\sup_{\theta,u}\left\Vert K((\theta,u),.)-\widetilde{K}_{M}((\theta,u),.)\right\Vert  & = & \sup_{\theta,u}\sup_{fs.t.|f|\leq1}\left|\int K((\theta,u),(d\theta',du'))f(\theta',u')-\int\tilde{K}_{M}((\theta,u),(d\theta',du'))f(\theta',u')\right| \nonumber\\
 & = & \sup_{\theta,u}\sup_{fs.t.|f|\leq1}\left|\int\left[q(d\theta'|\theta)\pi_{\theta'}(du')\alpha((\theta,u),(d\theta',du'))+\delta_{\theta,u}(d\theta',du')r(\theta,u)\right]f(\theta',u')\right. \nonumber \\
 &  & \quad\left.-\int\left[q(d\theta'|\theta)L_{\theta'}^{M}(v_{0},du')\alpha((\theta,u),(d\theta',du'))+\delta_{\theta,u}(d\theta',du')\widetilde{r}(\theta,u)\right]f(\theta',u')\right| \nonumber\\
 & \leq & \sup_{\theta,u}\sup_{fs.t.|f|\leq1}\left|\int\left[q(d\theta'|\theta)\alpha((\theta,u),(d\theta',du'))f(\theta',u')\left\{ \pi_{\theta'}(du')-L_{\theta'}^{M}(v_{0},du')\right\} \right]\right| \nonumber \\
 &  & \qquad+\sup_{\theta,u}\sup_{fs.t.|f|\leq1}\left|f(\theta,u)\left\{ r(\theta,u)-\widetilde{r}(\theta,u)\right\} \right| \nonumber\\
 & \leq & \sup_{\theta,u}\sup_{fs.t.|f|\leq1}\int\left|q(d\theta'|\theta)\right|\left|\alpha((\theta,u),(d\theta',du'))\right|\left|f(\theta',u')\right|\left|\pi_{\theta'}(du')-L_{\theta'}^{M}(v_{0},du')\right| \nonumber \\
 &  & \qquad+\sup_{\theta,u}\sup_{fs.t.|f|\leq1}\left|f(\theta,u)\right|\left|r(\theta,u)-\widetilde{r}(\theta,u)\right| \nonumber\\
 & \leq & \int D\left|\pi_{\theta'}(du')-L_{\theta'}^{M}(v_{0},du')\right|+\sup_{\theta,u}\left|r(\theta,u)-\widetilde{r}(\theta,u)\right| \nonumber\\
 & \leq & DC_{L}(v_{0})\rho_{L}^{M}+\sup_{\theta,u}\left|r(\theta,u)-\widetilde{r}(\theta,u)\right| \nonumber
\end{eqnarray}
using the geometric ergodicity of $L_{\theta^{*}}$ for every $\theta^{*}\in\Theta$.
Using this property again, for the final term we have:
\begin{eqnarray}
\sup_{\theta,u}\left|r(\theta,u)-\widetilde{r}(\theta,u)\right| & = & \sup_{\theta,u}\left|\int_{\theta^{*},u^{*}}q(\theta^{*}|\theta)\alpha\left((\theta,u),(\theta^{*},u^{*})\right)\right. \nonumber \\
 &  & \qquad\left.\left(L_{\theta^{*}}^{M}(v_{0},u^{*})-\pi_{\theta^{*}}(u^{*})\right)d\theta^{*}du^{*}\right| \nonumber\\
 & \leq & C_{L}(v_{0})\rho_{L}^{M}\sup_{\theta,u}\left|\int_{\theta^{*},u^{*}}q(\theta^{*}|\theta)\alpha\left((\theta,u),(\theta^{*},u^{*})\right)d\theta^{*}du^{*}\right| \nonumber\\
 & \leq & C_{L}(v_{0})\rho_{L}^{M}. \nonumber
\end{eqnarray}
Combining these results we obtain
\begin{equation} \nonumber
\sup_{(\theta,u)\in\Theta\times\mathcal{X}}\left\Vert K((\theta,u),.)-\widetilde{K}_{M}((\theta,u),.)\right\Vert \leq(D+1)C_{L}(v_{0})\rho_{L}^{M}.
\end{equation}
Using this result, and the uniform ergodicity of $K$, we obtain that
for any $(\theta,u),(\vartheta,v)\in\Theta\times\mathcal{X}$
\begin{eqnarray}
\left\Vert \widetilde{K}_{M}^{n}((\theta,u),.)-\widetilde{K}_{M}^{n}((\vartheta,v),.)\right\Vert  & \leq & \left\Vert \widetilde{K}_{M}^{n}((\theta,u),.)-K_{M}^{n}((\theta,u),.)\right\Vert +\left\Vert K_{M}^{n}((\vartheta,v),.)-\widetilde{K}_{M}^{n}((\vartheta,v),.)\right\Vert \nonumber \\
 &  & \qquad+\left\Vert K_{M}^{n}((\theta,u),.)-K_{M}^{n}((\vartheta,v),.)\right\Vert \nonumber \\
 & \leq & 2\frac{C_{K}}{1-\rho_{K}}(D+1)C_{L}(u_{0})\rho_{L}^{M}+C_{K}\rho_{K}^{n}.\nonumber
\end{eqnarray}

Now define
\begin{equation} \nonumber
\widetilde{\rho}_{K}:=\rho_{K}\sqrt[n]{C_{K}\left(1+2\frac{\rho_{L}^{M}\rho_{K}^{-n}}{1-\rho}(D+1)C_{L}(v_{0})\right)}\leq\rho_{K}\sqrt[n]{C_{K}}\left(1+2\frac{\rho_{L}^{M}\rho_{K}^{-n}}{1-\rho}(D+1)C_{L}(v_{0})\right).
\end{equation}

Choose $\epsilon\in(0,\rho_{K}-1)$ and let $n\in\mathbb{N}$ be such
that \foreignlanguage{english}{$\sqrt[n]{C_{K}}\leq\sqrt{1+\epsilon}$
and $M_{1}\in\mathbb{N}$ such that
\begin{equation} \nonumber
1+2\frac{\rho_{L}^{M_{1}}\rho_{K}^{-n}}{1-\rho}(D+1)C_{L}(v_{0})\leq\sqrt{1+\epsilon}.
\end{equation}
}We then have that $\widetilde{\rho}_{K}\leq\rho_{K}(1+\epsilon)<1$
for $M>M_{1}$, and equation \ref{eq:approx_conv_rate} follows.

Then from equation \ref{eq:for_target} we have that for any $n\geq1$
\begin{eqnarray}
\left\Vert \pi\widetilde{K}_{M}^{n}((\theta,u),.)-\pi K^{n}((\theta,u),.)\right\Vert  & \leq & \sum_{i=0}^{n-1}\left\Vert \pi\left(\widetilde{K}_{M}^{n-i-1}((\theta,u),(K-\widetilde{K}_{M})(K^{i}-\pi)(.))\right)\right\Vert \nonumber \\
 & \leq & C_{K}(D+1)C_{L}(u_{0})\rho_{L}^{M}\sum_{i=0}^{n-1}\rho_{K}^{i} \nonumber \\
 & \leq & \frac{C_{K}}{1-\rho_{K}}(D+1)C_{L}(v_{0})\rho_{L}^{M}. \nonumber
\end{eqnarray}
So for any $\varepsilon>0$, we can chose $M_{2}\in\mathbb{N}$ such
that $(D+1)C_{L}(u_{0})\rho_{L}^{M_{2}}\leq\varepsilon$, so that
\begin{equation}\nonumber
\left\Vert \pi\widetilde{K}_{M}^{n}((\theta,u),.)-\pi K^{n}((\theta,u),.)\right\Vert \leq\frac{C_{K}\varepsilon}{(1-\rho_{K})}.
\end{equation}
 Using $\left\Vert \pi-\widetilde{\pi}_{M}\right\Vert =\lim_{n\rightarrow\infty}\left\Vert \pi K_{M}^{n}-\widetilde{\pi}_{M}\widetilde{K}_{M}^{n}\right\Vert $
and choosing $M_{0}=M_{1}\vee M_{2}$ the proof is completed.
\end{proof}
We note that the same argument may be used to obtain the same result
where $L_{\theta^{*}}$ is allowed to use the value of $u$ generated
at the previous iteration, as long as $L_{\theta^{*}}$ is uniformly
ergodic for every $\theta^{*}\in\Theta$.\selectlanguage{english}

\bibliographystyle{mychicago}
\addcontentsline{toc}{section}{\refname}\bibliography{refs}

\begin{thebibliography}{}

\bibitem[\protect\citeauthoryear{Andrieu, Doucet, and Holenstein}{Andrieu {\em
  et~al.}}{2010}]{Andrieu2010}
Andrieu, C., Doucet, A., and Holenstein, R. (2010, June).
\newblock {Particle Markov chain Monte Carlo methods}.
\newblock {\em Journal of the Royal Statistical Society: Series B (Statistical
  Methodology)}~{\em 72\/}(3), 269--342.

\bibitem[\protect\citeauthoryear{Andrieu and Roberts}{Andrieu and
  Roberts}{2009}]{Andrieu2009}
Andrieu, C. and Roberts, G.~O. (2009, April).
\newblock {The pseudo-marginal approach for efficient Monte Carlo
  computations}.
\newblock {\em The Annals of Statistics}~{\em 37\/}(2), 697--725.

\bibitem[\protect\citeauthoryear{Atchad\'{e}, Lartillot, and
  Robert}{Atchad\'{e} {\em et~al.}}{2008}]{Atchade2008}
Atchad\'{e}, Y.~F., Lartillot, N., and Robert, C.~P. (2008).
\newblock {Bayesian computation for statistical models with intractable
  normalizing constants}.
\newblock Technical report.

\bibitem[\protect\citeauthoryear{Beaumont}{Beaumont}{2003}]{Beaumont2003}
Beaumont, M.~A. (2003, July).
\newblock {Estimation of population growth or decline in genetically monitored
  populations.}
\newblock {\em Genetics}~{\em 164\/}(3), 1139--60.

\bibitem[\protect\citeauthoryear{Besag}{Besag}{1974}]{Besag1974}
Besag, J. (1974).
\newblock {Spatial Interaction and the Statistical Analysis of Lattice
  Systems}.
\newblock {\em Journal of the Royal Statistical Society. Series B}~{\em
  36\/}(2), 192--236.

\bibitem[\protect\citeauthoryear{Besag}{Besag}{1975}]{Besag1975}
Besag, J. (1975, September).
\newblock {Statistical Analysis of Non-Lattice Data}.
\newblock {\em The Statistician}~{\em 24\/}(3), 179--195.

\bibitem[\protect\citeauthoryear{Besag}{Besag}{1986}]{JulianBesag1986}
Besag, J. (1986).
\newblock {On the Statistical Analysis of Dirty Pictures}.
\newblock {\em Journal of the Royal Statistical Society: Series B}~{\em
  48\/}(3), 259--302.

\bibitem[\protect\citeauthoryear{Caimo and Friel}{Caimo and
  Friel}{2011}]{Caimo2011}
Caimo, A. and Friel, N. (2011).
\newblock {Bayesian inference for exponential random graph models}.
\newblock {\em Social Networks},~{\bf 33}, 41--55.

\bibitem[\protect\citeauthoryear{Carter and Kohn}{Carter and
  Kohn}{1994}]{Carter1994}
Carter, C. and Kohn, R. (1994).
\newblock {On Gibbs sampling for state space models}.
\newblock {\em Biometrika}~{\em 81\/}(3), 541.

\bibitem[\protect\citeauthoryear{{Del Moral}, Doucet, and Jasra}{{Del Moral}
  {\em et~al.}}{2006}]{DelMoral2006}
{Del Moral}, P., Doucet, A., and Jasra, A. (2006).
\newblock {Sequential monte carlo samplers}.
\newblock {\em Journal of the Royal Statistical Society: Series B(Statistical
  Methodology)}~{\em 68\/}(3), 411--436.

\bibitem[\protect\citeauthoryear{{Del Moral}, Doucet, and Jasra}{{Del Moral}
  {\em et~al.}}{2011}]{DelMoral2008}
{Del Moral}, P., Doucet, A., and Jasra, A. (2011).
\newblock {An adaptive sequential Monte Carlo method for approximate Bayesian
  computation}.
\newblock {\em Statistics and Computing}.

\bibitem[\protect\citeauthoryear{Frank and Strauss}{Frank and
  Strauss}{1986}]{Frank1986}
Frank, O. and Strauss, D. (1986).
\newblock {Markov graphs}.
\newblock {\em Journal of the American Statistical Association}~{\em
  81\/}(395), 832--842.

\bibitem[\protect\citeauthoryear{Friel, Pettitt, Reeves, and Wit}{Friel {\em
  et~al.}}{2009}]{Friel2009}
Friel, N., Pettitt, A.~N., Reeves, R., and Wit, E. (2009).
\newblock {Bayesian inference in hidden Markov random fields for binary data
  defined on large lattices}.
\newblock {\em Journal of Computational and Graphical Statistics}~{\em
  18\/}(2).

\bibitem[\protect\citeauthoryear{Geyer}{Geyer}{2011}]{Geyer2011}
Geyer, C.~J. (2011).
\newblock {Importance Sampling, Simulated Tempering and Umbrella Sampling}.
\newblock In S.~P. Brooks, A.~Gelman, G.~L. Jones, and X.-L. Meng (Eds.), {\em
  Handbook of Markov chain Monte Carlo}. Chapman \& Hall/CRC.

\bibitem[\protect\citeauthoryear{Geyer and Thompson}{Geyer and
  Thompson}{1992}]{Geyer1992b}
Geyer, C.~J. and Thompson, E.~A. (1992).
\newblock {Constrained Monte Carlo Maximum Likelihood for Dependent Data}.
\newblock {\em Journal of the Royal Statistical Society. Series B
  (Methodological)}~{\em 54\/}(3), 657--699.

\bibitem[\protect\citeauthoryear{Green and Richardson}{Green and
  Richardson}{2002}]{Green2002}
Green, P.~J. and Richardson, S. (2002).
\newblock {Hidden Markov Models and Disease Mapping}.
\newblock {\em Journal of the American Statistical Association}~{\em
  97\/}(460).

\bibitem[\protect\citeauthoryear{Grelaud, Robert, and Marin}{Grelaud {\em
  et~al.}}{2009}]{Grelaud2009a}
Grelaud, A., Robert, C.~P., and Marin, J.-M. (2009, February).
\newblock {ABC methods for model choice in Gibbs random fields}.
\newblock {\em Comptes Rendus Mathematique}~{\em 347\/}(3-4), 205--210.

\bibitem[\protect\citeauthoryear{Hamze and de~Freitas}{Hamze and
  de~Freitas}{2005}]{Hamze2005}
Hamze, F. and de~Freitas, N. (2005).
\newblock {Hot Coupling: A Particle Approach to Inference and Normalization on
  Pairwise Undirected Graphs of Arbitrary Topology}.
\newblock In {\em Neural Information Processing Systems (NIPS)}.

\bibitem[\protect\citeauthoryear{Heckerman}{Heckerman}{1996}]{Heckerman1996}
Heckerman, D. (1996).
\newblock {A Tutorial on Learning With Bayesian Networks}.
\newblock Technical Report November, Microsoft Research.

\bibitem[\protect\citeauthoryear{Higdon}{Higdon}{1998}]{Higdon1998}
Higdon, D.~M. (1998, June).
\newblock {Auxiliary Variable Methods for Markov Chain Monte Carlo with
  Applications}.
\newblock {\em Journal of the American Statistical Association}~{\em
  93\/}(442), 585.

\bibitem[\protect\citeauthoryear{Koskinen}{Koskinen}{2008}]{Koskinen2008}
Koskinen, J.~H. (2008).
\newblock {The Linked Importance Sampler Auxiliary Variable Metropolis Hastings
  Algorithm for Distributions with Intractable Normalising Constants}.
\newblock Technical Report~01, University of Melbourne.

\bibitem[\protect\citeauthoryear{Koskinen, Robins, and Pattison}{Koskinen {\em
  et~al.}}{2010}]{Koskinen2010}
Koskinen, J.~H., Robins, G.~L., and Pattison, P.~E. (2010, May).
\newblock {Analysing exponential random graph (p-star) models with missing data
  using Bayesian data augmentation}.
\newblock {\em Statistical Methodology}~{\em 7\/}(3), 366--384.

\bibitem[\protect\citeauthoryear{Marin, Pudlo, Robert, and Ryder}{Marin {\em
  et~al.}}{2011}]{Marin2011}
Marin, J.-M., Pudlo, P., Robert, C.~P., and Ryder, R. (2011).
\newblock {Approximate Bayesian Computational methods}.
\newblock Technical report.

\bibitem[\protect\citeauthoryear{Marjoram, Molitor, Plagnol, and
  Tavare}{Marjoram {\em et~al.}}{2003}]{Marjoram2003a}
Marjoram, P., Molitor, J., Plagnol, V., and Tavare, S. (2003, December).
\newblock {Markov chain Monte Carlo without likelihoods}.
\newblock {\em Proceedings of the National Academy of Sciences of the United
  States of America}~{\em 100\/}(26), 15324--8.

\bibitem[\protect\citeauthoryear{M{\o}ller, Pettitt, Berthelsen, and
  Reeves}{M{\o}ller {\em et~al.}}{2004}]{Moller2004}
M{\o}ller, J., Pettitt, A.~N., Berthelsen, K.~K., and Reeves, R.~W. (2004).
\newblock {An efficient Markov chain Monte Carlo method for distributions with
  intractable normalising constants}.
\newblock Technical report, Aalborg Univesity.

\bibitem[\protect\citeauthoryear{M{\o}ller, Pettitt, Reeves, and
  Berthelsen}{M{\o}ller {\em et~al.}}{2006}]{Moller2006}
M{\o}ller, J., Pettitt, A.~N., Reeves, R.~W., and Berthelsen, K.~K. (2006,
  June).
\newblock {An efficient Markov chain Monte Carlo method for distributions with
  intractable normalising constants}.
\newblock {\em Biometrika}~{\em 93\/}(2), 451--458.

\bibitem[\protect\citeauthoryear{Murray and Ghahramani}{Murray and
  Ghahramani}{2004}]{Murray2004}
Murray, I. and Ghahramani, Z. (2004).
\newblock {Bayesian learning in undirected graphical models: approximate MCMC
  algorithms}.
\newblock {\em Proceedings of the 20th Conference on Uncertainty in Artificial
  Intelligence}.

\bibitem[\protect\citeauthoryear{Murray, Ghahramani, and MacKay}{Murray {\em
  et~al.}}{2006}]{Murray2006}
Murray, I., Ghahramani, Z., and MacKay, D. J.~C. (2006).
\newblock {MCMC for doubly-intractable distributions}.
\newblock In {\em Proceedings of the 22nd Annual Conference on Uncertainty in
  Artificial Intelligence (UAI)}, pp.\  359--366.

\bibitem[\protect\citeauthoryear{Neal}{Neal}{2001}]{Neal2001}
Neal, R. (2001).
\newblock {Annealed importance sampling}.
\newblock {\em Statistics and Computing}~{\em 11\/}(2), 125--139.

\bibitem[\protect\citeauthoryear{Pritchard, Seielstad, Perez-Lezaun, and
  Feldman}{Pritchard {\em et~al.}}{1999}]{Pritchard1999}
Pritchard, J.~K., Seielstad, M.~T., Perez-Lezaun, A., and Feldman, M.~W. (1999,
  December).
\newblock {Population growth of human Y chromosomes: a study of Y chromosome
  microsatellites.}
\newblock {\em Molecular Biology and Evolution}~{\em 16\/}(12), 1791--8.

\bibitem[\protect\citeauthoryear{Propp and Wilson}{Propp and
  Wilson}{1996}]{Propp1996}
Propp, J.~G. and Wilson, D.~B. (1996).
\newblock {Exact Sampling with Coupled Markov chains and Applications to
  Statistical Mechanics}.
\newblock {\em Random Structures and Algorithms}~{\em 9\/}(1\&2), 223--252.

\bibitem[\protect\citeauthoryear{Robert, Beaumont, Marin, and Cornuet}{Robert
  {\em et~al.}}{2011}]{Robert2011smc}
Robert, C.~P., Beaumont, M.~A., Marin, J.-M., and Cornuet, J.-M. (2011).
\newblock {Adaptivity for ABC algorithms: the ABC-PMC scheme}.
\newblock {\em Biometrika}~{\em 96\/}(4), 983--990.

\bibitem[\protect\citeauthoryear{Sisson, Fan, and Tanaka}{Sisson {\em
  et~al.}}{2007}]{Sisson2007smc}
Sisson, S.~A., Fan, Y., and Tanaka, M.~M. (2007, February).
\newblock {Sequential Monte Carlo without likelihoods}.
\newblock {\em Proceedings of the National Academy of Sciences of the United
  States of America}~{\em 104\/}(6), 1760--5.

\bibitem[\protect\citeauthoryear{Snijders}{Snijders}{2002}]{Snijders2002}
Snijders, T. A.~B. (2002).
\newblock {Markov chain Monte Carlo estimation of exponential random graph
  models}.
\newblock {\em Journal of Social Structure}~{\em 3\/}(2), 1--40.

\bibitem[\protect\citeauthoryear{Tierney}{Tierney}{1998}]{Tierney1998}
Tierney, L. (1998).
\newblock {A Note On Metropolis-Hastings Kernels For General State Spaces}.
\newblock {\em Annals of Applied Probability}~{\em 8\/}(1), 1--9.

\bibitem[\protect\citeauthoryear{Wasserman and Pattison}{Wasserman and
  Pattison}{1996}]{Wasserman1996}
Wasserman, S. and Pattison, P.~E. (1996).
\newblock {Logit Models and Logistic Regressions for Social Networks}.
\newblock {\em Psychometrika}~{\em 61\/}(3), 401--425.

\end{thebibliography}

\end{document}